\documentclass[superscriptaddress]{revtex4-2}

\usepackage{amssymb,amsmath,amsfonts,amssymb}
\usepackage{graphics,graphicx,color, subcaption}
\usepackage{empheq}

\def\@abssec#1{\vspace{.05in}\footnotesize \parindent .2in
{\bf #1. }\ignorespaces}
\def\proof{\par{\it Proof}. \ignorespaces}
\def\endproof{{\ \vbox{\hrule\hbox{%
   \vrule height1.3ex\hskip0.8ex\vrule}\hrule
  }}\par}
\graphicspath{{/EPSF/}{Figures/}}
\DeclareGraphicsExtensions{.eps}
\usepackage[margin=0.85in]{geometry}
\usepackage{tikz-cd}

\newtheorem{theorem}{Theorem}[section]

\def \Rm {\mathbb R}

\def \Cm {\mathbb C}

\def \Sm {\mathbb S}

\newcommand{\dsum}{\displaystyle\sum}

\newcommand{\pdr}[2]{\dfrac{\partial{#1}}{\partial{#2}}}

\newcommand{\bk}{\mathbf k}

\newcommand{\mC}{\mathcal C}
\newcommand{\mF}{\mathcal F}
\newcommand{\mH}{\mathcal H}

\newcommand{\mJ}{{\mathfrak J}}

\newcommand{\fC}{{\mathfrak C}}

\newcommand{\Pp[1]}{{P_{#1}}}
\newcommand{\pP[1]}{{\partial_{P_{#1}}}}
\newcommand{\bPp[1]}{{\bar P_{#1}}}
\newcommand{\pbP[1]}{{\partial_{\bar P_{#1}}}}

\newcommand{\cout}[1]{}

\newcommand{\op}{\omega_p}
\newcommand{\om}{\Omega}

\newcommand{\kp}{{\mathbf{k}}} %% k_\perp
\newcommand{\BDI}{{\rm BDI}}

\begin{document}
%%%%%%%%%%%%%
%%% TITLE %%%
%%%%%%%%%%%%%
\title{Effects of interface regularity on the bulk-edge correspondence in continuum photonic systems}

%%%%%%%%%%%%%%%
%%% AUTHORS %%%
%%%%%%%%%%%%%%%
\author{Matthew Frazier}
\email{Corresponding author: mjfrazier@uchicago.edu}
\affiliation{%
 Committee on Computational and Applied Mathematics, University of Chicago, Chicago, IL 60637, USA
}%

\author{Guillaume Bal}
 \affiliation{%
 Committee on Computational and Applied Mathematics, University of Chicago, Chicago, IL 60637, USA
}
\affiliation{%
 Departments of Mathematics and Statistics, University of Chicago, Chicago, IL 60637, USA
}%

\begin{abstract}
In this study we analyze the topological invariants and edge states of transverse magnetic wave propagation in continuum photonic systems at a finite-width interface between two gyrotropic materials with different magnetic biases. Where previous studies have almost exclusively considered sharp transitions between two different electromagnetic media, we consider the more general geometry where the magnetic field bias is allowed to vary arbitrarily in a finite-width interface between two bulk regions. We find that when the magnetic field bias varies continuously between the two bulk regions, the Bulk Edge Correspondence (BEC) holds robustly with respect to well-defined Chern invariants. However, discontinuities in the magnetic field bias introduce new edge modes and anomalous high-wavenumber spectral effects which alter the BEC. We analyze these spectral alterations and define a new anomalous BEC in continuum photonic systems which includes contributions from topological invariants and discontinuities in magnetic field bias.
\end{abstract}
\maketitle

\section{Introduction}
The theory of topological insulators, originally developed for quantum hall systems \cite{kane2005quantum, bernevig2013topological, shen2012topological} and successfully applied to geophysical shallow water models \cite{delplace2017topological, tauber2019bulk, bal2022topological}, 2d Moir\'e materials \cite{bal2023mathematical, bal2022multiscale}, layered graphene \cite{haldane1988model, min2008electronic, zhang2011spontaneous}, and periodic photonic systems \cite{haldane2008possible, raghu2008analogs, wang2008reflection}, provides a characterization of robust asymmetric transport along edges between two insulating media via topological invariants. These invariants split the system into topological phases in parameter space, and the canonical relation termed the Bulk-edge Correspondence (BEC) states that the number of edge states appearing in an insulating gap at an interface between two topological phases is quantized by the difference in topological invariants across that interface. Although the BEC has been proved in quantum Hall systems \cite{hasan2010colloquium}, in limited cases for atmospheric equatorial waves \cite{tauber2019bulk, bal2022topological}, and for more general theoretical cases under distinct invariants \cite{delplace2022berry, fu2021topological}, a general proof for continuum systems is elusive and has proved incorrect for equatorial waves and continuum photonic systems under certain conditions \cite{gangaraj2018coupled, hassani2020physical, bal2024topological}. 

Continuum photonic systems, specifically Transverse Magnetic (TM) wave propagation in gyrotropic media, contain several barriers to topological classification and characterization of edge states via the BEC \cite{silveirinha2015chern, hassani2019truly, buddhiraju2020absence}. First, for the local Drude model of electromagnetic transport in this system, integrals of Berry curvature (which produce Chern invariants under appropriate conditions) of the system are not only non-integer in some cases, but can continuously vary with the parameters of the system \cite{silveirinha2015chern, hanson2016notes, hassani2016effects}. Silveirinha, by adding a physically motivated spatial cutoff (SC) regularization at high wavenumbers, was able to restore integer-valued integrals of Berry curvature for the system, which allowed the BEC to be evaluated in a meaningful way \cite{silveirinha2015chern, silveirinha2016bulk}. However, comparison of these invariants with the number of edge modes appearing in spectral calculations across a number of studies showed inconsistencies in the BEC both in relation to the total number of edge states differing from that predicted by topological invariants, and the appearance of a number of edge states which are consistent with the BEC but do not fully span the insulating gap \cite{gangaraj2018coupled, han2022anomalous, hassani2020physical,  hassani2019truly}. Further, we recently showed that although the SC regularization produces integer integrals of Berry curvature, it does so in spite of the fact that these invariants are not always well-defined Chern numbers \cite{frazier2025topological}. Application of an alternate regularization produced \textit{bona fide} Chern invariants which were distinct from the previous values in some cases, and we were able to show numerically that these new invariants satisfied the BEC in all cases where the transition between two topological phases was continuous. 

Continuity of the interface is an aspect which has played an important role in the BEC for geophysical shallow water models governing equatorial waves. The BEC was rigorously proved in this system for any case in which the derivative of the Coriolis force is bounded \cite{tauber2019bulk, bal2022topological, souslov2019topological}. In any case where a jump in the Coriolis force is present, the number of edge states appearing in a given energy range within the insulating gap is increased or decreased, respectively, by the number of positive or negative jumps in the Coriolis force. In addition, the (flat) asymptotic dispersion of certain edge modes, induced by discontinuities in the Coriolis force, is determined by the size of the jump itself so that edge modes do not always span the entire insulating gap \cite{bal2024topological}. In this way, the BEC is altered in a predictable way by the number and amplitude of discontinuities in the Coriolis force parameter.

The main result of this paper is that the role of continuity of the interface in the BEC for continuum photonic systems is nearly identical to that in equatorial waves. We derive here an anomalous BEC for continuum photonic systems which includes contributions to spectral flow from well-defined topological invariants and from anomalous high-wavenumber spectral effects which appear as a consequence of magnetic field jumps. Many studies which consider sharp interfaces between biased plasmas or between a biased plasma and a topologically trivial medium have indeed shown incomplete band gap coverage by edge states with flat asymptotic dispersion, similar to the equatorial wave case \cite{gangaraj2018coupled, hassani2020physical, hassani2019truly,  buddhiraju2020absence, han2022anomalous}. As a consequence they identify that the spectral flow in some band gaps is not robustly quantized by bulk topological invariants (and hence violates the BEC). As an explanation of the failure of the BEC most studies analyze the realistic effects of dissipation and the addition of non-local dynamics, which restores full band gap coverage by all edge modes and thus restores the BEC as a physical principle \cite{hassani2020physical, hassani2019truly,buddhiraju2020absence}. However, whether or not the spectral flow is influenced by interface effects, it still displays a remarkable quantized asymmetric edge transport spatially confined to the interface in the appropriate energy range, which under the correct conditions will be robust against scattering and dispersion into the bulk. Thus, a quantitative description of interface effects which allows us to utilize simpler models wherever they are physically relevant may be critically important for the design and implementation of continuum photonic systems where these remarkable properties are desired. By characterizing anomalies in the BEC in terms of the regularity of the interface, we thus aim to restore the BEC in continuum photonics as a mathematical principle and a powerful theoretical predictor of robust asymmetric transport properties. 

Finally, we note that some of the inconsistencies in the BEC for photonic continua were addressed previously by applying a non-local pressure term to the system, whose linearization provides a hydrodynamic model for the TM photonic system \cite{pakniyat2022chern, serra2025influence}. This model has integer-valued integrals of Berry curvature without the need for regularization, and they are in fact well-defined Chern numbers as we show in Section \ref{sec:hydro}. As such, we find that the BEC also holds for this model when the interface is continuous and similar anomalous edge states effects appear due to jumps in magnetic field. 

In this paper, we consider propagation of TM waves at the interface between two cold plasmas, modeling macroscopic electromagnetic oscillations in isotropic metals and semi-metals, with differing magnetic biases. We choose to focus on this particular interface rather than an interface between a biased plasma and isotropic opaque or transparent media in order to avoid additional complications from specifying additional boundary conditions \cite{silveirinha2016bulk} and focus on the effects of continuity of the magnetic field on the BEC. In order to evaluate the BEC we construct Bulk Difference Invariants (BDIs) \cite{bal2022topological, bal2019continuous, bal2023topological}, which provide a more flexible means to define \textit{bona fide} Chern numbers with which to evaluate the BEC than differences in bulk invariants. Construction of well-defined BDIs for a local Drude model requires high-wavenumber regularization \cite{silveirinha2015chern, frazier2025topological} and we analyze the BEC for the SC regularization, the BDI regularization we introduced in \cite{frazier2025topological}, and for well-defined Chern numbers applied to the unregularized system. In addition, we analyze the linearized hydrodynamic Drude model as introduced in \cite{pakniyat2022chern} as a minimal model of non-local effects. The remainder of the paper is organized as follows. In Section \ref{sec:background} we summarize the definition of BDIs and their relation to the BEC, introduce relevant results from the analysis of equatorial waves, and state our main result, which precisely modifies the BEC in continuum photonic systems based on the number and amplitude of magnetic field discontinuities. Sections \ref{sec:cold_plasma_model} and \ref{sec:reg_models} discuss the main results in detail as they apply to the local model and its regularized counterparts and provide numerical spectral calculations for further justification. Section \ref{sec:hydro} extends these results to the hydrodynamic model of TM wave propagation. We conclude with Section \ref{sec:discuss} and additional calculations and details can be found in the appendices. 

\section{Background}\label{sec:background}

This section summarizes the construction and computation of bulk and interface invariants and their application to the cold plasma model for photonic continua, which will be essential to our analysis in the following sections. We also summarize applicable results from the closely-related shallow water model for atmospheric dynamics. For additional details, see \cite{bernevig2013topological, bal2024topological, silveirinha2015chern, prodan2016bulk,bal2024continuous}.

\subsection{Bulk-difference invariants and bulk-edge correspondence}
\paragraph{BDIs.}First, consider a family of self-adjoint Hamiltonians in momentum-space $\hat H(\kp)$ for $\kp\in\Rm^2$ with values in $\Cm^{n\times n}$, and assume the following spectral decomposition:
\[  \hat H(\kp) = \dsum_{j=1}^n \lambda_j(\kp) \Pi_j(\kp).\]
Here $\Pi_j(\kp)=|\psi_j(\kp)\rangle\langle\psi_j(\kp)|$ are rank-one projectors and $\lambda_j(\bk)$ are their corresponding (real-valued) eigenvalues. Bulk Hamiltonians are assumed to have constant coefficients so that each momentum-space Hamiltonian $\hat H(\bk)$ is associated to a real-space Hamiltonian $H(D_x, D_y) = \mF^{-1}\hat H(\bk)\mF$ by the replacement $k_x\to D_x:= -i\partial_x$, $k_y \to D_y:= -i\partial_y$. Associated with each projector family $\Rm^2\ni \kp\mapsto\Pi_j(\kp)$ is the following integral of Berry curvature
\begin{equation}\label{eq:curv}
    \mC[\Pi_j]=\frac i{2\pi} \int_{\Rm^2} {\rm tr} \Pi_j d\Pi_j \wedge d\Pi_j = \frac 1{2\pi}\int_{\Rm^2}dA_j(\bk),\qquad 
d\Pi := \pdr{\Pi}{k_x}dk_x + \pdr{\Pi}{k_y}dk_y,
\end{equation}
for the Berry connection $A_j(\bk) = i\left(\psi_j(\bk), d\psi_j(\bk)\right)$. Here, ${\rm tr}$ stands for the standard matrix trace. Note that this definition can be extended to any projector of arbitrary-rank $\Pi$ using the first equality in \eqref{eq:curv}. We also assume a global band gap between bands $\ell$ and $\ell+1$, i.e., a frequency interval $I_\ell$ such that $\lambda_j(\kp)< I_\ell$ for $j\leq \ell$ while $\lambda_j(\kp)> I_\ell$ for $j \geq \ell+1$. Then denoting $P_\ell(\bk) = \sum_{j \le \ell} \Pi_j(\bk)$ we may define a bulk invariant which characterizes gap $\ell$ by $\mC[P_\ell]$, equal to $\sum_{j = 1}^\ell \mC[\Pi_j]$ by the additivity of Chern numbers \cite{bernevig2013topological}. When the domain of integration in \eqref{eq:curv} is a compact Brillouin zone then this integral is in fact a well-defined Chern number as the integral of the curvature of a 1-form Berry connection over a closed manifold (the Brillouin torus). In continuum systems we may attempt similarly to identify $\Rm^2$ with the closed manifold $\Sm^2$ by stereographic projection \cite{bal2022topological, silveirinha2015chern}, however we must verify the extra condition that the projector $\Pi$ is continuous (modulo a gauge transformation) at the point in $\Sm^2$ identified with $\bk \to \infty$. In many continuum systems, including the cold plasma model we consider in this paper, this criterion is not met without the addition of regularizing terms \cite{tauber2019bulk, bal2022topological, frazier2025topological}.

One way to overcome this difficulty is to define bulk-difference invariants (BDIs) which instead characterize the common band gap of two bulks which meet along a given interface. Given two bulk systems which are governed by the two Hamiltonians $\hat H^S(\bk)$ and $\hat H^N(\bk)$ respectively, we define the bulk-difference invariant for a spectral gap between bands $\ell$ and $\ell+1$ (labeled by $\ell$) shared by both $\hat H^S(\bk)$ and $\hat H^N(\bk)$ to be:
\begin{equation}\label{eq:ChernBDI}
    \fC_\ell := \fC[P_\ell^S,P_\ell^N] := \mC[P_\ell^S]-\mC[P_\ell^N]  = \frac  i{2\pi} \int_{\Rm^2} {\rm tr} P_\ell^S dP_\ell^S\wedge dP_\ell^S
 - \frac  i{2\pi} \int_{\Rm^2} {\rm tr} P_\ell^N dP_\ell^N \wedge dP_\ell^N.
\end{equation}

Provided that we have the following gluing condition 
\begin{equation}\label{eq:gluing}
    \lim_{k\to\infty} P_\ell^N(k,\theta) = \lim_{k\to\infty} P_\ell^S(k,\theta) \quad \mbox{ for all } \theta\in [0, 2\pi),
\end{equation}
$\fC_\ell$ is a {\em bona fide} Chern number for the family of projectors $\{P_\ell^N(\bk), P_\ell^S(\bk)\}$ defined continuously on the sphere $\Sm^2$, where $P_\ell^N$ is defined to be zero on the lower half of $\Sm^2$ and $P_\ell^S$ is defined to be zero on the upper half. Here $(k, \theta)$ corresponds to the polar coordinates of $\bk$ in $\Rm^2$. See \cite{bal2019continuous} for more details on the definition of BDIs and \cite{frazier2025topological} for an extensive analysis of the regularization conditions necessary to define BDIs for the cold plasma model. We will extend our analysis of BDIs in photonic continua to the non-local hydrodynamic model in Section \ref{sec:hydro}. Note that \eqref{eq:ChernBDI} corresponds to the usual difference in bulk invariants $\mC[P^S_\ell]-\mC[P^N_\ell]$, except $\mC[P^{N/S}_\ell]$ need not be integer-valued in order for $\fC_\ell$ to be a well-defined Chern number- only the extra gluing condition \eqref{eq:gluing} must be verified. 

\paragraph{BEC.} Subsequently, $\fC_\ell$ may be related to the (quantized) asymmetric edge current along the interface between the two bulk systems characterized by $H^S$ and $H^N$. Assume that $H^S$ and $H^N$ have a common insulating band gap $g = [E_0, E_1]$. Then we define the expectation of signal transport in the energy region $g$ along the interface by the quantity $\sigma_I(g)$. See Appendix \ref{sec:sigma_I} for a detailed discussion and definition of $\sigma_I$. In particular, $\sigma_I$ is defined via spectral calculus on the \textit{interface Hamiltonian} $H_I(x, y, D_x, D_y)$, for which $H_I = H_I(D_x, D_y)= H^N$ for $y \ge 1$ and $H_I= H_I(D_x, D_y) = H^S$ for $y \le -1$. Therefore $H_I$ models the interface $y\approx 0$ between two bulks governed by the constant coefficient Hamiltonians $H^N$ and $H^S$ and $\sigma_I(g)$ quantifies the signal transport in the $+x$ direction along the interface in the energy region $g$. We consider a flat interface for simplicity but curved interfaces may also be analyzed using this framework \cite{bal2023edge}.

The BEC under this formalism is a general principle stating that the number of edge states appearing in a particular (bulk insulating) energy region $g$, quantified by the edge current observable $2\pi \sigma_I(g)$, is related to the topological properties of the bulk Hamiltonians by the relation
\begin{equation}\label{eq:BEC}
 2\pi\sigma_I(g) = \BDI = \fC[P^S_\ell,P^N_\ell]
\end{equation}
where $\ell$ is a common spectral gap of the bulk Hamiltonians $H^h$ for $h\in\{N,S\}$ and $g$ is either a sub-interval or the entire interval of the common spectral gap $\ell$. In cases like the cold plasma model considered below where the BEC cannot be proven, $\sigma_I(g)$ can be evaluated from spectral calculations by the \textit{spectral flow}- the net number of spectral branches which span the spectral gap with positive slope \cite{bal2022topological}.

For discrete or periodic models, the BEC naturally arises from the presence of a compact Brillouin zone which mathematically links the spectral and topological properties of the system. In such systems the BEC has been proved in many general settings \cite{drouot2021microlocal, graf2013bulk,bernevig2013topological,prodan2016bulk, elbau2002equality, hatsugai1993chern}. Meanwhile, in continuum systems the Brillouin zone is essentially all of $\Rm^2$ so that the non-trivial compactification of the Brillouin zone significantly complicates any proof of the BEC. A class of continuous operators for which it is guaranteed to apply is that of {\em elliptic} operators \cite{bal2022topological,bal2024continuous,quinn2024approximations}. Elliptic operators are essentially characterized by singular values $|\lambda^h_j(\kp)|\to\infty$ as $|\kp|\to\infty$ for all branches $j$ and both $h\in\{N,S\}$; see above references. 

Unfortunately, the cold plasma model is not elliptic, and a number of studies have shown inconsistencies in the BEC at an interface between two differently-biased cold plasmas \cite{gangaraj2018coupled, hassani2020physical, han2022anomalous}. These inconsistencies consist of a number of edge states which either differs by one from the difference in bulk invariants $\mC[P^S_\ell] -\mC[P_\ell^N]$, or edge states which fail to span the entire insulating gap $\ell$. The above studies consider a sharp interface between two biased plasmas, or a sharp interface between a biased plasma and another topologically trivial material; i.e. $H_I(x, y, D_x, D_y) = H^S(D_x, D_y)$ for $y<0$ and $H_I(x, y, D_x, D_y) = H^N(D_x, D_y)$ for $y \ge 0$. Interestingly, we found recently that the BEC in photonic continua holds as long as \textit{bona fide} Chern numbers are defined via BDIs (i.e. condition \eqref{eq:gluing} holds) and the transition between $H^S$ and $H^N$ is continuous \cite{frazier2025topological}, suggesting that continuity of the boundary may  play an essential role in the BEC for continuum photonic systems. This was also the case in another continuum system, the shallow water equations which govern large-scale atmospheric oscillations. We briefly summarize the applicable results for this system below. 

\subsection{BEC for equatorial waves}\label{sec:3x3}
A system which displays similar anomalies in the BEC to continuum photonic systems is the linearized shallow water model, which predicts the topological protection of certain atmospheric oscillations concentrated at the equator. The Hamiltonian for the system, which acts on the state vector $(\eta, u, v)$ representing, respectively, atmospheric height, horizontal velocity, and vertical velocity, is (without regularizing terms) \cite{tauber2019bulk, bal2022topological}:
\[
H_I = 
\begin{pmatrix}
    0 & D_x & D_y\\
    D_x & 0 & -if(y)\\
    D_y & if(y) & 0
\end{pmatrix},
\]
where $f$ is a Coriolis force parameter which is negative in the southern hemisphere ($f(y) = f^S < 0$ for $y \le -1$) and positive in the northern hemisphere ($f(y) = f^N>0$ for $y \ge 1$). For a constant value of $f$, the spectra of the bulk Hamiltonians $\hat H^h(\bk)$, $h \in \{N, S\}$ are given by the bands: $E_0(\bk) = 0$ and $E_\pm(\bk) = \pm \sqrt{|\bk|^2 + f_h^2}$ \cite{delplace2017topological, tauber2019bulk}. Thus there are two global spectral gaps $(0, f_{min})$ and $(-f_{min}, 0)$ for $f_{min} = \min\{|f^S|, |f^N|\}$. BDIs for these spectral gaps, with respect to a transition from $H^S$ to $H^N$, can be defined as: $\fC_{\pm 1} = \pm 2$. We focus on the upper band gap for simplicity, although the results are equal but opposite for the lower band gap.

The BEC for this system was proved in any case where $f'(y)$ is bounded, i.e. $2\pi\sigma_I(g) = 2$ for any interval $g \subset (0, f_{min})$ given that $f'(y)$ is bounded for all $y$ \cite{bal2022topological}. However, when $f(y)$ has discontinuities at some discrete number of points $\{y_k\}_{k =1}^K \subset (-1, 1)$ the BEC no longer holds \cite{bal2024topological}. Denote the half-difference $f_{ko} = \frac 12(f(y_k^+)-f(y_k^-))$ and define the following sets of energies:
\[
\varepsilon_{L} = \{f_{ko} \;|\; 1 \le k \le K, \;f_{ko} >0\},
\quad
\varepsilon_{R} = \{-f_{ko}\;|\; 1 \le k \le K, \;f_{ko}<0\}.
\]
Further define $\mJ_L(E)$ as the number of indices $1\le k \le K$ such that $E_k\in \varepsilon_L$ and $E < E_k$ and $\mJ_R(E)$ as the number of indices $1\le k\le K$ such that $E_k \in \varepsilon_R$ and $E < E_k$. Suppose now that $g$ is a connected interval such that $g \subset (0, f_{min})\backslash (\varepsilon_L \cup \varepsilon_R)$ and which contains $E$. The main result of \cite{bal2024topological} is that:
\begin{equation}\label{eq_BEC}
2\pi\sigma_I(g) = 2-\mJ_L(E)+\mJ(E).
\end{equation}

Thus the edge current $\sigma_I(g)$ has contributions from both the invariant $\fC_{+1} = 2$ and from integer contributions of each discontinuity associated with $\mJ_L$ or $\mJ_R$. In particular, the energy range affected by each discontinuity is directly related to its half-jump value $f_{ko}$. This anomalous BEC was deduced by analyzing the spectral properties of $H_I$ explicitly, from which it was deduced that for each discontinuity $k$, an edge mode concentrated at $y_k$ exists whose asymptotic dispersion is $f_{ko}$ either as $k_x \to \infty$ or $-\infty$ depending on the sign of $f_{ko}$.

\subsection{BEC for continuum photonics}

In the cold plasma model for continuum photonics which we analyze below, the magnetic field parameter (cyclotron frequency) $\om(y)$ plays a similar role to the Coriolis force parameter $f(y)$ above, and we assume it may also contain a discrete number of discontinuities $\{y_k\}_{k = 1}^K$. Although the spectrum of $H_I$ is much more difficult to derive analytically in this case, we are able to deduce three (rather than one) edge states which concentrate around each discontinuity at high wavenumbers and have flat asymptotic dispersion. We denote their eigenvalues $\omega_e^{(j)}(k_x; y_k)$, $j \in \{1, 2, 3\}$. Each of these edge states can a priori contribute to the edge current of two separate spectral gaps, $(0, E_1)$, $(E_{uh}, E_2)$, via a flat asymptotic dispersion which depends on $\om(y_k^\pm)$ and the plasma frequency $\op$. We label the asymptotic values $\hat \omega_e^{(j)}(y_k) = \lim_{k_x \to \pm \infty}\omega_e^{(j)}(k_x; y_k)$ and define the sets of energies for $i \in \{1, 2, 3\}$:
\begin{align*}
\varepsilon_{iL} = \{\hat\omega_e^{(i)}(y_k)|\; 1 \le k \le K, \;\om(y_k^+)-\om(y_k^-) >0\}\nonumber\\
\varepsilon_{iR} = \{\hat\omega_e^{(i)}(y_k)|\; 1 \le k \le K, \;\om(y_k^+)-\om(y_k^-)<0\} \nonumber.
\end{align*}
Similarly $\mJ_{ij}(E)$ is the number of indices $1\le k\le K$ such that $E_k \in \varepsilon_{ij}$ and $E<E_k$ for $i \in \{1, 3\}$, $j \in \{L, R\}$, and $\mJ_{2j}$ is the number of indices $1\le k \le K$ such that $E_k \in \varepsilon_{2j}$ and $E> E_k$; $j \in \{L, R\}$. Our main result is that for $E \in (0, E_1)$:
\[
    2\pi \sigma_I(g) = \mJ_{1R}(E) -\mJ_{1L}(E)+\mJ_{2R}(E) -\mJ_{2L}(E)
\]
and for $E \in (E_{uh}, E_2)$:
\[
    2\pi \sigma_I(g) = 2+  \mJ_{3R}(E) -\mJ_{3L}(E)
\]
where $g \subset (0, E_1) \cup (E_{uh}, E_2) \backslash\left(\cup_{i = 1}^3 (\varepsilon_{iL} \cup \varepsilon_{iR})\right)$ and $g$ contains $E$. We will show in the next section that the well-defined BDIs for the bulk spectral gaps $(0, E_1)$ and $(E_{uh}, E_2)$ are, respectively, $(\fC_1, \fC_2) = (0, 2)$. Thus, each discontinuity within the interface alters the BEC by an integer amount depending on the energy region $g$. However, the spectral invariant $\sigma_I(g)$ still remains quantized as long as $g$ does not contain any of the discrete values $\cup_{i = 1}^3 (\varepsilon_{iL} \cup \varepsilon_{iR})$. Thus, although the traditional BEC does not hold, we still find that the remarkable spectral properties for which it is usually valued still apply. In the remainder of the paper we derive this result for the local Drude model for continuum photonics and subsequently extend our results to regularized models and the non-local hydrodynamic Drude model.

\section{BEC for local model}\label{sec:cold_plasma_model}
Our system of study will be the light-matter interacting cold plasma model, which models electromagnetic waves in an electron gas in the low-temperature limit. This model is valid in various energy regimes for electromagnetic wave propagation in various isotropic metals and semi-metals, with recent experimental successes for THz wave propagation in InSb \cite{wang2019photonic, liang2021tunable}. In this section non-local effects are ignored, producing the local Drude model, while in Section \ref{sec:hydro} we analyze the linearized hydrodynamic Drude model, which provides a minimal model of non-local effects. 

Coupling of the Lorentz force equation with Maxwell's equations in an electron gas biased by a spatially varying magnetic field with amplitude $B_0(x, y)$ in the $\hat z$ direction, and restricting wave propagation to the $xoy$ plane we obtain the following interface Hamiltonian for TM modes \cite{frazier2025topological, serra2025influence}:
\begin{equation}\label{eq:HTMTE}
     H_{I} = 
    \begin{pmatrix}
        0 & i\Omega(y) & i\omega_p & 0 & 0 \\
        -i\Omega(y) & 0 & 0 & i\omega_p & 0 \\
        -i\omega_p & 0 & 0 & 0 & -D_y \\
        0 & -i\omega_p & 0 & 0 & D_x \\
        0 & 0 & -D_y & D_x & 0 \\
    \end{pmatrix}, \qquad H_I\psi(y) = \omega\psi(y).
\end{equation}
$H_I$ acts on $\psi = (v_x, v_y, E_x, E_y, B_z)^T$ ($v$ for electron velocity, $E, B$ for electric and magnetic field) with the following definitions of the cyclotron frequency and plasma frequency:
\[\Omega(x,y) = \frac{q_e B_0(x,y)}{m_e},\quad \omega_p(x,y) =  \sqrt{\frac{n_eq_e^2 }{m_e\epsilon_0}}
\]
and the following normalizations:
\[ck\to k,\quad cB\to B,\ -v(t,x,y)\sqrt{\frac{m_e n_e}{\epsilon_0}} \to v(t,x,y).\]
Here $q_e, m_e$ are electron charge and mass and $n_e$ the average electron density, assumed to be constant. See \cite{han2022anomalous, frazier2025topological, parker2020topological, rajawat2025continuum, rajawat2026propagation} for models in which $n_e$ varies spatially. We assume that $\om(y) = \om_N$ for $y\ge 1$ and $\om(y) = \om_S$ for $y \le -1$ so that the bulk Hamiltonians $\hat H^h(\bk)$, $h \in \{N, S\}$ are given by $H_I$ with $\om(y) \to  \om_h$ and $(D_x, D_y) \to (k_x, k_y) = \bk$. It is easily verified that this system is equivalent to the a-priori more general photonic models in \cite{silveirinha2015chern,hanson2016notes, hassani2016effects,  silveirinha2016bulk}. Using the effective electron mass in lieu of $m_e$ for InSb and a typical electron density yields a plasma frequency of $\op \approx 2$ THz and shows that a relatively modest magnetic field can produce cyclotron frequencies in the THz range \cite{hassani2019truly}. Therefore we consider $\op$, $\om$, and $k$ normalized to THz and a typical value of $\op = 2$ in our analysis below. In most of the figures below we consider bulk modes with $|\om_{N/S}| = 0.5\op$, an achievable value for InSb \cite{liang2021tunable}.

The eigenvalues of $\hat H^h(\bk)$ are given by $\tilde\omega_{-1,-2}=-\tilde\omega_{1,2}$ and
\[\tilde\omega_{0,1,2}^2(k)= \left(0,\;\frac 12\left( 2\op^2+\om_h^2 + k^2 - \sqrt{(k^2-\om_h^2)^2 + 4\op^2\om_h^2}\right), \frac 12 \left(\op^2  + k^2 + \om_h^2 + \sqrt{(k^2-\om_h^2)^2 + 4\op^2\om_h^2} \right)\right)\] 
where $k = |\bk|$. Importantly, we observe that there are two global band gaps since $\tilde \omega_1(0) > 0$ and $\lim_{k \to \infty} \tilde\omega_1 = \sqrt{\op^2 + \om_h^2} < \tilde \omega_2( 0)$, with both bands monotonically increasing with $k$. Due to parity symmetry of the spectrum we analyze only the positive spectrum and band gaps. Integrals of Berry curvature over these bands yield \cite{hanson2016notes, serra2025influence}:
\begin{align}\label{eq:berry_int}
\mC_{\pm 1} = \mp\text{sgn}(\om_h)\left( 1 +\frac{\sigma_h}{\sqrt{1+\sigma_h^2}}\right)\\
\nonumber\\
\mC_{\pm2} = \pm \text{sgn}(\om_h)\nonumber
\end{align}
for $\sigma_h = |\om_h|/\op$ (note that the sign of $\om$ relative to $B_0$ changes the sign of the above invariants- here we use the conventions used in \cite{hanson2016notes}). Band gaps exist between bands 0 and 1 (labeled $\ell = 1$) and between bands 1 and 2 (labeled $\ell = 2$), and assuming that \eqref{eq:gluing} holds, we define BDIs for these two gaps $(\fC_1, \fC_2)$ via \eqref{eq:ChernBDI}.

We can see immediately that $\mC_{1}$ is not necessarily integer-valued, and can take on a continuum of values. This is a direct consequence of the fact that the limit of the projectors $\lim_{\bk \to \infty} \Pi_1(\bk)$ cannot be continuously defined so that $\mC_1$ is not a well-defined Chern number. A standard fix, introduced in \cite{silveirinha2015chern}, is to introduce a high-wavenumber regularization, for which in the parameters $\om$ and $\op$ have a non-trivial dispersion for high wavenumbers $k$; i.e. $\op, \om_h$ depend non-trivially on $k$ when $k$ becomes large. When $\om, \op$ are replaced by $\om(k), \op(k)$, $\sigma_h$ in \eqref{eq:berry_int} is replaced by $\lim_{k\to \infty} |\om_h(k)|/\op(k)$ so that, given this limit does not depend on the low-$k$ values of $\om_h$ and  $\op$, we may fix a stable value for $\mC_{\pm 1}$ and distinguish two distinct topological phases: $S$ ($\om < 0$) and $N$ ($\om > 0$). The physically-motivated SC regularization proposed in \cite{silveirinha2015chern} consists of the dispersion $\lim_{k\to \infty} \op(k) = 0$, while we found in \cite{frazier2025topological} that \eqref{eq:gluing} is satisfied for any regularization for which $\lim_{k\to\infty}\sigma(k) = 0$, therefore allowing us to define BDIs $(\fC_1, \fC_2) = (0, 2)$ which are \textit{bona fide} Chern numbers with respect to a transition between the two bulk phases $S$ ($\om < 0$) and $N$ ($\om>0$). Although the regularization $\op(k) \to 0$ produces stable invariants $(\fC_1, \fC_2) = (-2, 2)$ with respect to the same topological transition, the projectors associated to $\fC_1$ do not satisfy \eqref{eq:gluing} and therefore only $\fC_2$ is a well-defined Chern number for the SC regularization. Through numerical spectral calculations we showed the BDIs $(\fC_1, \fC_2) = (0, 2)$ correctly predict the number of edge states in each gap provided that $\om(y)$ \textit{continuously} transitions between the two phases, which is illustrated in Figure \ref{fig:cont_BEC}. See Section \ref{sec:reg_models} for further details on the two regularization methods and Appendix \ref{sec:numerics} for details regarding numerical methods.

\begin{figure}[t!]
    \centering
    \includegraphics[width = \textwidth]{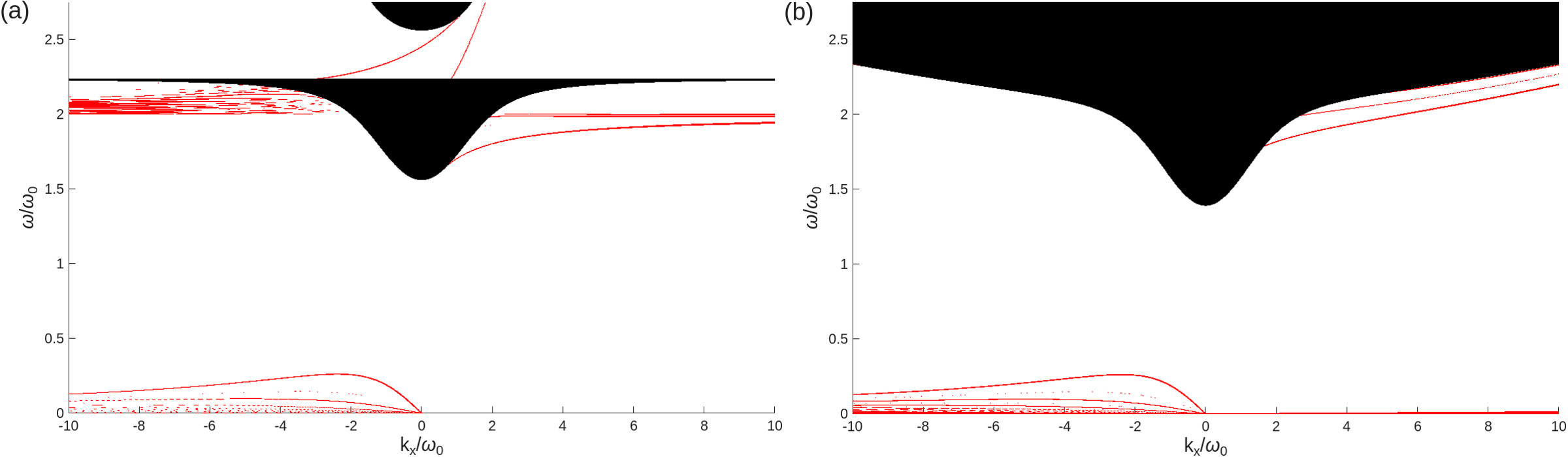}
\caption{Spectrum of $H_I(k_x)$ with $\om(y)$ which transitions smoothly between $\om_S = -1$ THz and $\om_N = 1$ THz with $\op = 2$ THz, $\omega_0 = 1$ THz. Black shading represents regions of bulk spectrum and red represents edge spectrum. (a) shows the local Drude model and (b) the hydrodynamic model discussed in Section \ref{sec:hydro}. In both cases the BEC holds using well-defined BDIs $(\fC_1, \fC_2) = (0, 2)$, although for the hydrodynamic model there is no longer an upper band gap.}
    \label{fig:cont_BEC}
\end{figure}

We now turn to the situation in which $\om(y)$ contains a single discontinuity at $y_0 \in (-1, 1)$. In the following we will additionally assume that $H_I$ is invariant in the $x-$direction so that we may apply a Fourier transform in $x$ only, giving $H_I = \mF^{-1}_{k_x \to D_x} H_I(k_x)\mF_{D_x \to k_x}$ for $D_x \to k_x$. This represents a flat boundary at $y \approx 0$ between the bulk systems $H^S$ and $H^N$ in the lower- and upper-half planes respectively. From the last paragraph we deduce global band gaps 1 and 2, respectively $(0, E_1)$ and $(E_{uh}, E_2)$ for $E_1 = \text{min}\{\tilde \omega_1^S(0), \tilde\omega_1^N(0)\}$, $E_{uh} = \sqrt{\op^2 + \text{max}\{\om^N, \om^S\}^2}$, and $E_2 = \text{min}\{\tilde \omega_2^S(0), \tilde\omega_2^N(0)\}$. We denote the following jump values for convenience:
\[
\om_+ = \om(y_0^+), \quad \om_- = \om(y_0^-), \quad \om_j = \frac{\om_+-\om_-}{2}.
\]

The spectrum of $H_I(k_x)$ is in general difficult to obtain analytically even for simple functions of $\om(y)$. However, at large values of $|k_x|$ we are able to deduce the presence of three eigenvalues of $H_I(k_x)$, $\omega_e^{(i)}(k_x)$, $i \in \{1, 2, 3\}$, which are associated to eigenmodes $\psi_{k_x}^{(i)}(y)$ of $H_I(k_x)$ that decay exponentially away from $y_0$. When $\om_j > 0$, $\omega_e^{(1, 3)}(k_x)$ exist for large negative values of $k_x$ and $\omega_e^{(2)}(k_x)$ exists for large positive values of $k_x$, and vice versa for $\om_j < 0$. While analytic expressions for $\omega_e^{(i)}(k_x)$ are not accessible, the limits $\hat \omega_e^{(2)} = \lim_{k_x\to \infty} \omega_e^{(2)}(k_x)$ and $\hat \omega_e^{(1, 3)} = \lim_{k_x \to -\infty}\omega_e^{(1, 3)}(k_x)$ correspond to the three positive roots (in ascending order) of the polynomial:
\begin{equation}\label{eq:asym_freq}
(\hat\omega_e^{(i)})^6 -(\hat\omega_e^{(i)})^4(2\op^2+\om_+^2+\om_-^2) + (\hat\omega_e^{(i)})^2 (\op^2+\om_+^2)(\op^2+\om_-^2) -\op^4\om_j^2 = 0, \quad i \in \{1, 2, 3\}.
\end{equation}

\begin{figure}[b!]
    \centering
    \includegraphics[width = \textwidth]{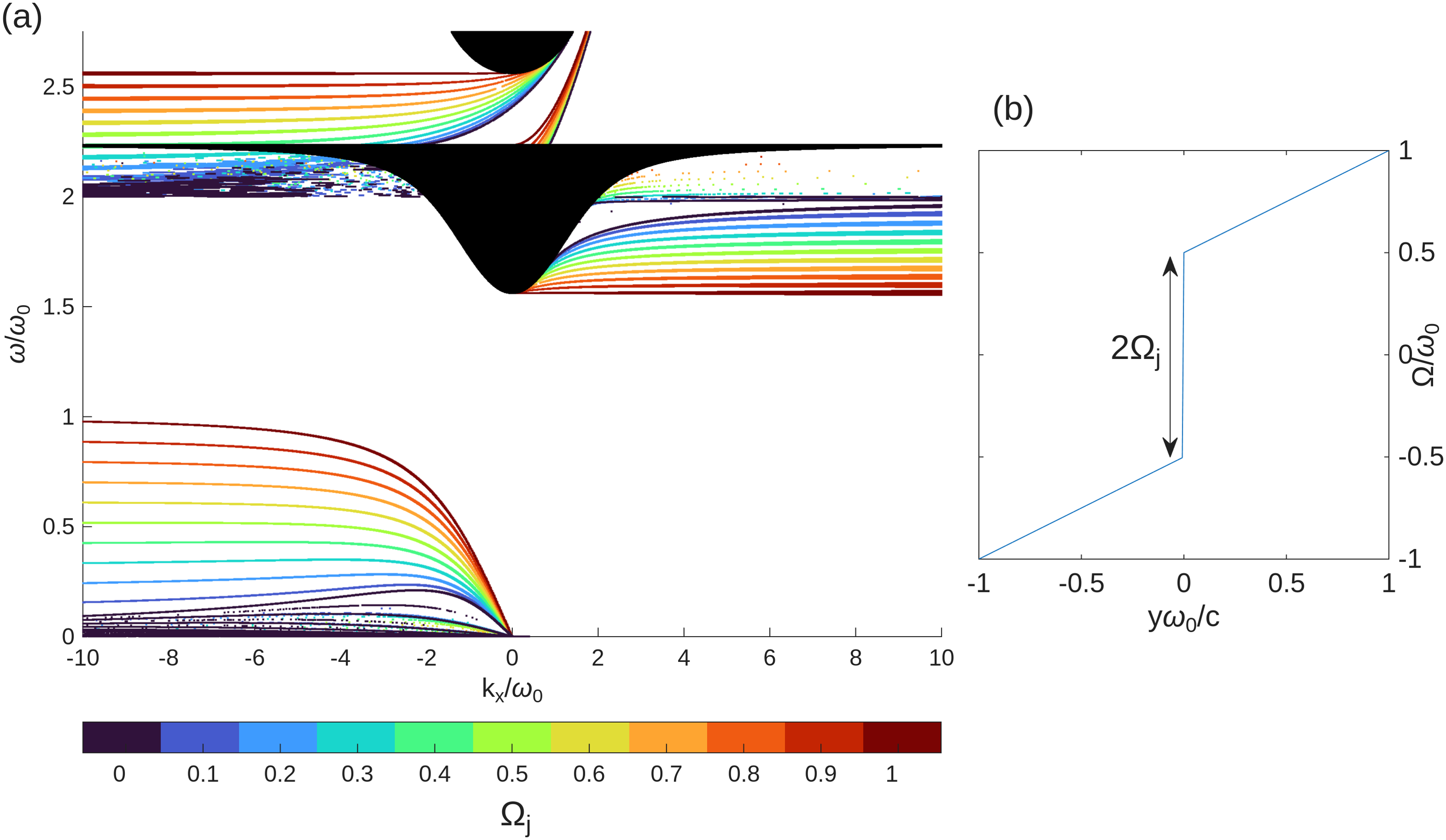}
    \caption{(a) Spectrum of $H_I(k_x)$ for which $\om(y)$ varies linearly between $\om(y\le -1) = \om^S = -1$ THz and $\om(y \ge 1) = \om^N = 1$ THz and has a single discontinuity at $y = 0$ for which $\om_j > 0$. Black regions correspond to the bulk spectrum and each color represents the edge spectrum for a different value of $\om_j$. (b) Plot of $\om(y)$ used in (a). $\op = 2$ THz, $\omega_0 = 1$ THz.}
    \label{fig:BEC_jumps}
\end{figure}

\begin{figure}[ht!]
    \centering
    \includegraphics[width = \textwidth]{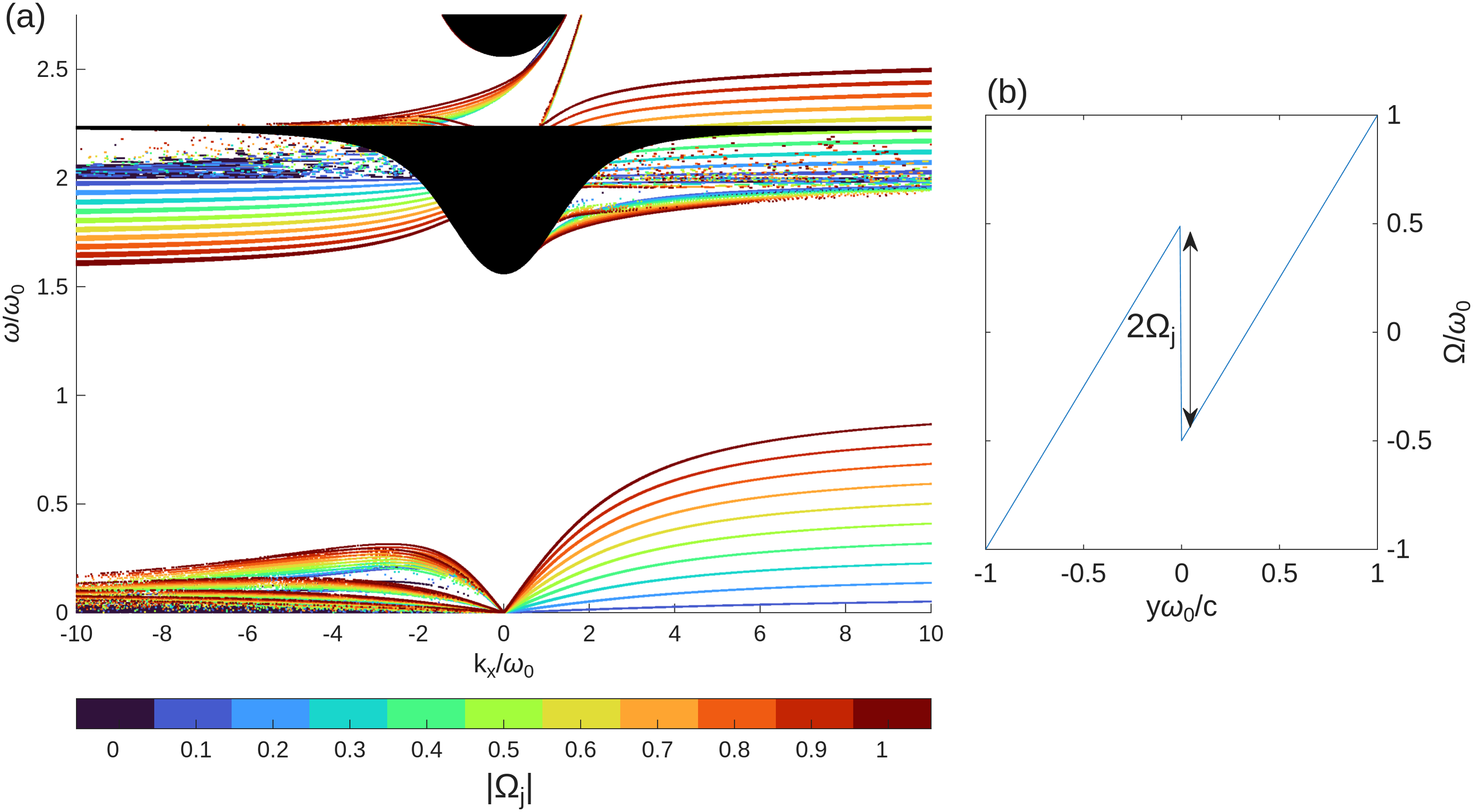}
    \caption{(a) Spectrum of $H_I(k_x)$ for which $\om(y)$ varies linearly between $\om(y\le -1) = \om^S = -1$ and $\om(y \ge 1) = \om^N = 1$ and has a single discontinuity at $y = 0$ for which $\om_j < 0$. Black regions correspond to bulk spectrum and each color represents the edge spectrum for a given value of $|\om_j|$. (b) Plot of $\om(y)$. $\op = 2$ THz, $\omega_0 = 1$ THz.}
    \label{fig:BEC_jumps_neg}
\end{figure}

Note in particular that $\hat\omega_e^{(1)}, \hat\omega_e^{(2)}, \hat\omega_e^{(3)}$ depend only on the values of $\om_-,\om_+$ and $ \op$. A fourth edge mode, $\omega_e^{(4)}(k_x)$, exists whenever sgn$(\om_N) = -\text{sgn}(\om_S)$ and converges to $+\infty$ as $k_x \to \infty$ whenever $\om_j > 0$ and converges to $+\infty$ as $k_x \to -\infty$ when $\om_j < 0$. $\omega_e^{(4)}(0) \in (E_1, E_{uh})$ and thus this mode contributes the same spectral flow regardless of the regularity of $\om(y)$. See Appendix \ref{sec:5x5_analysis} for a derivation of the above results.

While we have derived the existence of at least four spectral branches which may contribute to asymmetric edge transport, we turn to numerical spectral calculations to resolve the full spectrum of $H_I$ for general profiles of $\om(y)$. We find two regions of edge spectrum which are not described by the above analysis and occur only when $\om'(y)\ne 0$ for some interval in $-1 < y < 1$. First, we observe a continuum of edge states within the energy region $ (\op, E_{uh})$. In \cite{rajawat2025continuum} these modes were identified as ``local upper hybrid resonances" whose energies correspond to $\sqrt{\op^2 + \om^2(y)}$ for $y \in (-1, 1)$ and whose eigenfunctions are concentrated around the value of $y$ corresponding to the associated local upper hybrid frequency. Numerical spectral calculations from \cite{parker2020topological} also support this explanation. In our model of TM wave propagation these modes are confined outside the two band gaps and therefore do not contribute to the spectral flow of either, except when applying the SC regularization. Another set of bands of edge spectrum originate at the point $(k_x, \omega) = (0, 0)$ and also converge to 0 as $k_x \to \pm \infty$ depending on if $\om_j > 0$, or $\om_j < 0$. Since they only appear when $\om'(y) \ne0$ and are localized near $\omega = 0$, we expect that these branches arise from higher order ($\om'(y)$) interactions between the trivial band at $\tilde \omega_0 = 0$ and the longitudinal bulk band $\tilde \omega_1(k_x)$, in analogy with Rossby waves which arise in the shallow water system. As shown in Figures \ref{fig:BEC_jumps}-\ref{fig:BEC_2_jumps}, both sets of edge states, which only appear when $\om'(y)$ is non-zero, do not contribute to the spectral flow of either band gap except when applying the SC regularization, which we will discuss in the next section. 

Therefore, for each jump in $\om(y)$ three edge states, $\omega_e^{(1)}, \omega_e^{(2)}, \omega_e^{(3)}$, may alter the BEC for the two spectral gaps depending on their asymptotic dispersion as $k_x \to \pm \infty$. We now assume that a finite number of discontinuities in $\om(y)$ exist at points $\{y_k\}_{k = 1}^K \subset (-1, 1)$ and denote $\om_\pm^{(k)} = \om(y_k^\pm)$. Similarly to the shallow water model we define the following sets of frequencies for $i \in \{1, 2, 3\}$:
\begin{align}\label{eq:jump_energy}
\varepsilon_{iL} = \{\hat\omega_e^{(i)}(\om_+^{(k)}, \om_-^{(k)})|\; 1 \le k \le K, \;\om_j^{(k)} >0\}\\
\varepsilon_{iR} = \{\hat\omega_e^{(i)}(\om_+^{(k)}, \om_-^{(k)})|\; 1 \le k \le K, \;\om_j^{(k)}<0\}.\nonumber
\end{align}
Define $\mJ_{ij}(E)$ as the number of indices $1\le k\le K$ such that $E_k \in \varepsilon_{ij}$ and $E<E_k$ for $i \in \{1, 3\}$, $j \in \{L, R\}$, and $\mJ_{2j}$ as the number of indices $1\le k \le K$ such that $E_k \in \varepsilon_{2j}$ and $E> E_k$; $j \in \{L, R\}$. Suppose now that $g$ is any connected interval such that $g \in (0, E_1) \cup (E_{uh}, E_2) \backslash (\cup_{i = 1}^3 \varepsilon_{iL} \cup \varepsilon_{iR})$ and which contains $E$. We find that for $E \in (0, E_1)$:
\begin{equation}\label{eq:BEC_1}
    2\pi \sigma_I(g) = \mJ_{1R}(E) -\mJ_{1L}(E)+\mJ_{2R}(E) -\mJ_{2L}(E)
\end{equation}
and for $E \in (E_{uh}, E_2)$:
\begin{equation}\label{eq:BEC_2}
    2\pi \sigma_I(g) = 2+  \mJ_{3R}(E) -\mJ_{3L}(E).
\end{equation}
Note that these equations contain contributions to $\sigma_I(g)$ from the well-defined BDIs $(\fC_1, \fC_2) = (0, 2)$ and from the flat asymptotic dispersion of the bands $\omega_e^{(i)}(k_x)$, $i \in \{1, 2, 3\}$. This result is proved in limited cases in Appendix \ref{sec:5x5_analysis} and we turn to numerical spectral calculations to verify it in general. In most cases, we will consider the symmetric case $\om_- = -\om_+$ with $\om_j^2 < \text{max}\{\om_N^2, \om_S^2\}$ so that (from \eqref{eq:asym_freq}) $E_1<\hat\omega_e^{(2)}< E_{uh}$. In this case for $E \in (0, E_1)$ we have that $\mJ_{2L}(E), \mJ_{2R}(E)= 0$ and therefore:

\begin{equation}\label{eq:BEC_1_simp}
2\pi \sigma_I(g) = \mJ_{1R}(E)-\mJ_{1L}(E), \qquad 0< E< E_1.
\end{equation}

\begin{figure}[t!]
    \centering
    \includegraphics[width = \textwidth]{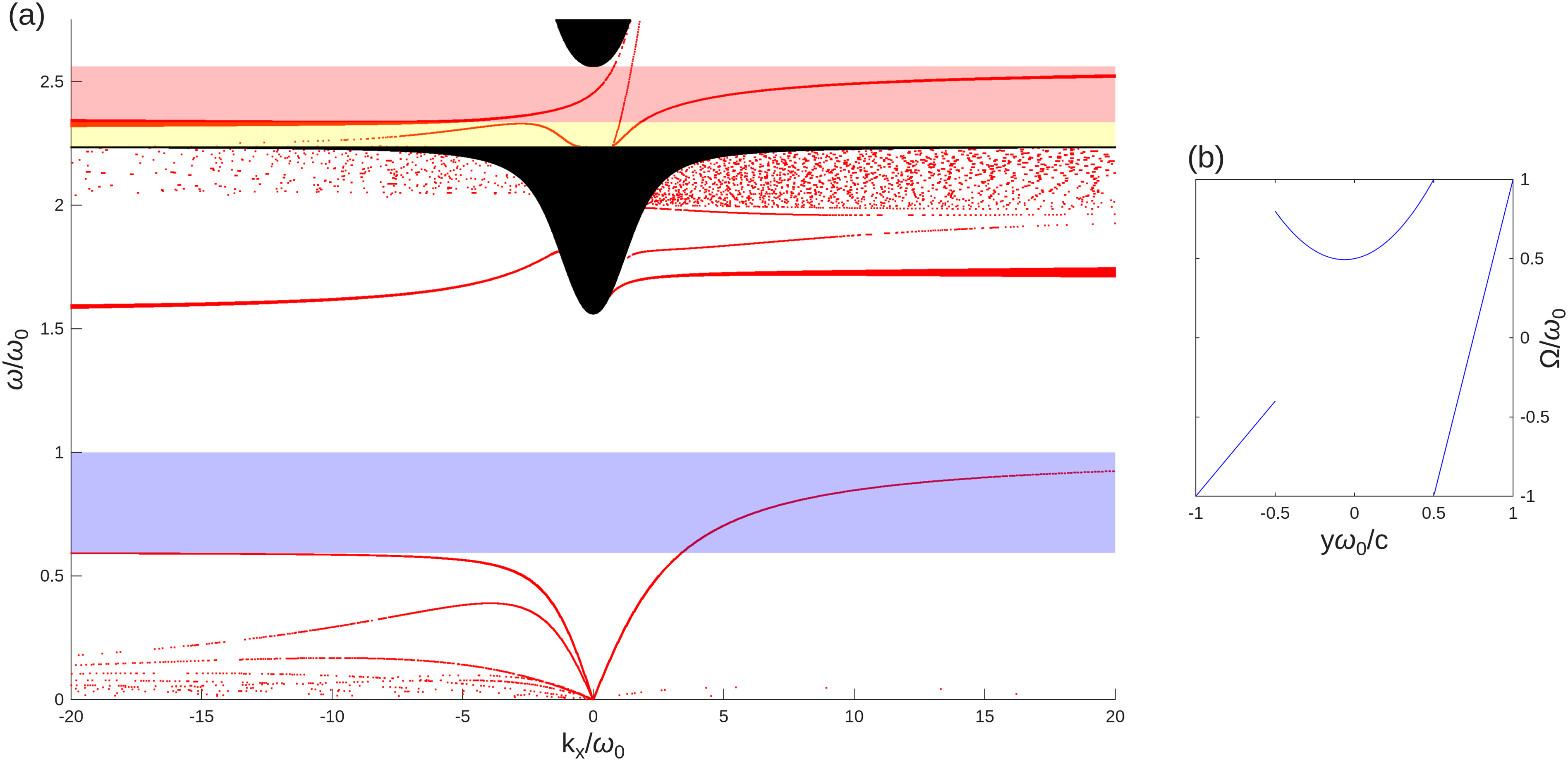}
    \caption{(a) Spectrum of $H_I(k_x)$ with $\om(y)$ as shown in (b). Areas of black represent bulk spectrum and areas of red represent edge spectrum. (b) $\om(y)$ which contains two discontinuities, one with $\om_j > 0$ and one with $\om_j < 0$, and one of which is not symmetric about $y = 0$. The spectral flow is now modified by both $\mJ_{iR}$ and $\mJ_{iL}$, $i\in\{1, 3\}$. For the first jump $\om_-^{(1)} = -0.4$, $\om_+^{(1)} = 0.8$ which yields $\hat\omega_e^{(1)} = 0.5941$ and $\hat\omega_e^{(3)} = 2.3356$. For the second jump $\om_-^{(2)} = -1$ and $\om_+^{(2)} = 1$ yielding $\hat\omega_e^{(1)} = 1$ and $\hat\omega_e^{(3)} = 2.5616 = E_2$. Therefore the blue shaded region $(0.5941, 1)$ now has a spectral flow of 1, the yellow region $(E_{uh}, 2.3356)$ a spectral flow of +2, and the red region $(2.3356, E_2)$ a spectral flow of +3. $\op = 2$ THz, $\omega_0 = 1$ THz.}
    \label{fig:BEC_2_jumps}
\end{figure}

Equations \eqref{eq:BEC_2} and \eqref{eq:BEC_1_simp} are verified numerically in Figure \ref{fig:BEC_jumps} and Figure \ref{fig:BEC_jumps_neg}, in which we consider $\om(y)$ which is linear between $\om^S = -1$, $\om^N = 1$ and has one discontinuity at $y = 0$ varying in magnitude from $|\om_j| = 0$ (no discontinuity) to $|\om_j| = 1$ (piecewise constant case), and in particular $\om_- = -\om_+$. Figure \ref{fig:BEC_jumps} shows the case where $\om_j > 0$ and Figure \ref{fig:BEC_jumps_neg} where $\om_j < 0$. Applying equations \eqref{eq:BEC_2} and \eqref{eq:BEC_1_simp} for $\om_j > 0$ gives $2\pi\sigma_I = -1$ in the region $0 <E < \om_j$, $2\pi \sigma_I = 0$ for $\om_j < E < E_1$, $2\pi\sigma_I = +1$ for $E_{uh} < E< \hat \omega_e^{(3)}(\om_j)$, and $2\pi \sigma_I = +2$ for $\hat \omega_e^{(3)}(\om_j) < E<  E_2$, where we have used the fact that $\hat \omega_e^{(1)} = \om_j$ when $\om_- = -\om_+$. Similarly for $\om_j < 0$ we get $2\pi\sigma_I = +1, 0, +3, +2$ respectively for the above mentioned energy ranges. Comparison with spectral flows in Figures \ref{fig:BEC_jumps} and \ref{fig:BEC_jumps_neg} shows that these predictions are correct. To show that our results hold in a more general case, Figure \ref{fig:BEC_2_jumps} shows an $\om(y)$ which contains two discontinuities and is non-linear.

Note that the invariants $(\fC_1, \fC_2)$ are defined by applying a high-wavenumber regularization to \eqref{eq:HTMTE}, however we have analyzed the BEC through the spectral flow of the unregularized system. We justify this by taking the view that the primary motivation of high-wavenumber regularization is to correct ill-defined topology at $\bk \to \infty$ and not to model any physical phenomenon so that the (arbitrarily) high wavenumber effects of regularization on the spectrum of $H_I$ can be disregarded (the situation is somewhat more nuanced for the SC regularization, which has a physical interpretation- see Section \ref{sec:reg_models}). In particular we note that even the non-local hydrodynamic model considered in Section \ref{sec:hydro} is derived from a semi-classical model and therefore is invalid for wavelengths comparable to or smaller than the Fermi wavelength. In any case we present a mathematically self-consistent analysis of regularized models in the next section.

\section{Regularized Models}\label{sec:reg_models}

In this section we analyze the BEC in a self-consistent way by comparing topological invariants to the spectral properties of the regularized Hamiltonians themselves. Critically, all the results of the previous section hold with the exception that $\om_\pm, \op$ in \eqref{eq:asym_freq} are replaced by their regularized values $\lim_{k\to \infty}\om_\pm(k), \op(k)$ according to equations \eqref{eq:reg_BDI} and \eqref{eq:reg_SC} below. 

The two regularizations we analyze are the BDI regularization, introduced more recently in \cite{frazier2025topological}, and the ubiquitous SC regularization introduced in \cite{silveirinha2015chern} and used extensively in the literature. The primary motivation for the BDI regularization is to ensure \eqref{eq:gluing} is met for the local Drude model so that \textit{bona fide} BDIs can be defined. We found in \cite{frazier2025topological} that \eqref{eq:gluing} was met for any regularization in which $\lim_{k \to \infty}\sigma(k) = 0$, and for concreteness we choose the regularization termed the BDI regularization in the rest of this paper as:
\begin{equation}\label{eq:reg_BDI}
\om(k, y) = \om(y)\left(1 + \left(\frac {k}{k_c}\right)^2\right)^{-1/2},
\end{equation}
where $k_c$ is an arbitrary cutoff wavenumber. In addition, we provided some heuristic justification for \eqref{eq:reg_BDI} in \cite{frazier2025topological} by considering some of the high-wavenumber effects of non-local considerations in the local Drude model. Note that since the topological invariants $\fC_j$ depend only on the properties of the bulk Hamiltonians, it is mathematically feasible to have a regularization which only applies for $|y|>1$. However, we apply the regularization at all $y$ values, which eliminates the somewhat pathological case that $\om(k, y)\ne 0$, $y\in (-1, 1)$ as $k_x\to \infty$ even as $\om^N(k), \om^S(k) \to 0$. As described in Section \ref{sec:cold_plasma_model} the (well-defined) BDIs produced by the BDI regularization are $(\fC_1, \fC_2) = (0, 2)$. 

We also analyze the ubiquitous Spatial Cutoff (SC) regularization:
\begin{equation}\label{eq:reg_SC}
\op(k) = \op\left(1 + \left(\frac {k}{k_c}\right)^2\right)^{-1}.
\end{equation}
This regularization is motivated by the physical principle that as the wavelength of EM excitations becomes small compared to the average distance between electrons, the macroscopic material response should vanish. Thus as $k\to \infty$ the SC regularization ensures that $\op\to 0$ so that EM wave propagation ($(E_x, E_y, B_z)$ for TM wave propagation) is de-coupled from material electron excitations ($(v_x, v_y)$ for TM wave propagation) at high wavenumbers, as seen from \eqref{eq:HTMTE}. In addition, this regularization applied to the local Drude model can be used to model the interface between a gyrotropic material and an unbiased plasma with an air gap of size $\sim 1/k_c$ in between \cite{silveirinha2016bulk}.

The invariants produced by the SC regularization are $(\fC_1, \fC_2) = (-2, 2)$. Although we found in \cite{frazier2025topological} that the SC regularization does not produce a well-defined Chern number for $\fC_1$, the invariants produced are both integers so that an evaluation of the BEC is still possible for both band gaps. Note that there is a considerable body of work on evaluating the BEC for the SC regularization for various phase transitions \cite{gangaraj2018coupled, silveirinha2015chern,   hassani2019truly, buddhiraju2020absence, han2022anomalous, silveirinha2016bulk}, but all previous studies have only considered the case where $\om, \op$ are piecewise constant with the exception of \cite{rajawat2025continuum, rajawat2026propagation}, which consider the case that $n_e$ varies linearly from a constant bulk value to zero across a finite interface.

In practice $k_c$ should be considered arbitrarily large so that the regularization produces well-defined topological invariants without altering the behavior of the system at physically relevant wavenumbers. However, we consider relatively small values of $k_c$ in order to demonstrate the spectral properties of the regularized model while avoiding high-wavenumber regions in which our numerical method may be ill-conditioned (see Appendix \ref{sec:numerics} for details on numerical methods). See e.g. \cite{silveirinha2016bulk, han2022anomalous} for analysis of edge states as $k_c \to \infty$. 

\subsection{BDI regularization}\label{sec:bdi_reg}

\begin{figure}[b!]
    \centering
    \includegraphics[width = \textwidth]{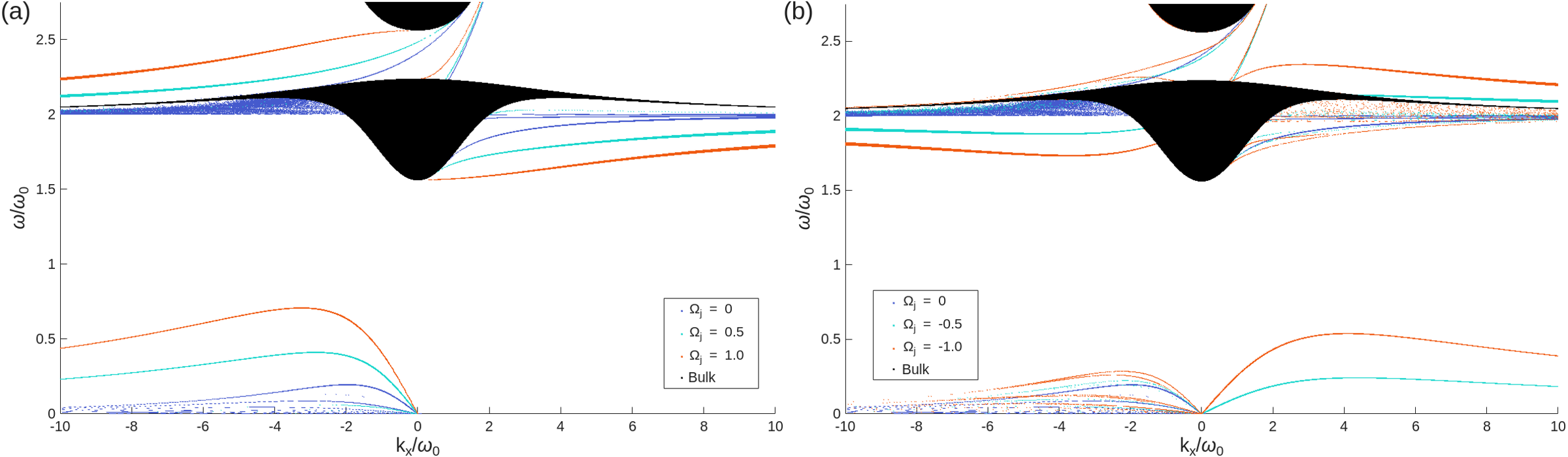}
\caption{Numerically calculated spectra for local model with BDI regularization \eqref{eq:reg_BDI} applied with $k_c = 5$ THz. $\om(y)$ is a linear transition between $\om^S = -1$ and $\om^N = 1$ with a jump at $y = 0$ as in Figures \ref{fig:BEC_jumps} and \ref{fig:BEC_jumps_neg}. (a) models positive jumps $\om_j > 0$ and (b) negative jumps $\om_j < 0$. $\op = 2$ THz, $\omega_0 = 1$ THz.}
\label{fig:bdi_reg}
\end{figure}
Applying \eqref{eq:reg_BDI} to the asymptotic dispersion relations \eqref{eq:asym_freq} we find that $\lim_{k\to \infty}\om_\pm(k) = 0$ so that for the BDI regularization: 
\[
\hat\omega_e^{(1)} =0, \quad
 \hat\omega_e^{(2)} = \op, \quad
\hat\omega_e^{(3)} =\op.
\]
Therefore we find that for the BDI regularization when $E \in (0, E_1)$, $\mJ_{1j}(E), \mJ_{2j}(E) = 0$ and when $E \in (E_{uh}, E_2)$, $\mJ_{3j}(E) = 0$, $j \in \{L, R\}$. This suggests that the BEC is \textit{always} satisfied for regularization \eqref{eq:reg_BDI}, which produces BDIs of $(\fC_1, \fC_2) = (0, 2)$. This is confirmed by numerical evidence in Figure \ref{fig:bdi_reg}, where we observe that $\omega_e^{(1)}(k_x)$ and $\omega_e^{(3)}(k_x)$, which previously converged to asymptotes $\om_j$ and $\hat \omega_e^{(3)}$ respectively, are now deformed to 0 and $\op$ by the regularization so that they no longer contribute to the spectral flow. 

This is additional evidence that regularization \eqref{eq:reg_BDI} provides a more robust quantization of edge modes than the SC regularization, which we consider below. While in \cite{frazier2025topological} we found that the (well-defined) BDIs $(\fC_1, \fC_2) = (0, 2)$ applied to the unregularized local model still predict the correct number of edge modes when $\om(y)$ is continuous, we find here that when we apply \eqref{eq:reg_BDI} in a self-consistent way to the regularized interface Hamiltonian itself, the aforementioned BDIs predict the correct number of edge states even in the presence of discontinuities in $\om(y)$. 

\subsection{SC regularization}\label{sec:sc_reg}

The SC regularization has been used extensively to study continuum photonic systems by providing a physically motivated means to restore integer values to integrals of Berry curvature, allowing the BEC to be evaluated. Although this regularization does not produce a well-defined Chern number for the lower band gap it has the advantage of a physical interpretation \cite{silveirinha2016bulk} and we wish to extend previous results for piecewise constant parameters $\om, \op$ to more general profiles of $\om(y)$. 

First, the SC regularization produces a separate set of invariants for the two band gaps present in TM wave propagation. Assuming \eqref{eq:reg_SC} we find that $\lim_{k\to \infty} \sigma(k) = \infty$ which produces integrals of Berry curvature $\mC_{\pm 1} = \pm2\text{sgn}(\om)$, $\mC_{\pm2}=\mp\text{sgn}(\om)$, for which we obtain $(\fC_1,\fC_2) = (-2, 2)$. The projectors associated to $\fC_2$ satisfy \eqref{eq:gluing}, however those associated with $\fC_1$ do not (see \cite{frazier2025topological} Appendix F for details). If we apply \eqref{eq:reg_SC} to \eqref{eq:asym_freq} we now obtain $\lim_{k\to \infty}\op(k)  =0$ so that we obtain for the symmetric case $\om_N = -\om_S$:
\[
\hat \omega_e^{(1)} = \om_j, \quad\hat\omega_e^{(2)} = 0, \quad \hat \omega_e^{(3)} = \om_j.
\]
Here we see that the asymptotic dispersion of $\omega_e^{(1)}$ remains unchanged, but the asymptotic dispersion of $\omega_e^{(2)}, \omega_e^{(3)}$ are altered with $\omega_e^{(2)}$ no longer depending on $\om_j$. Thus, assuming as above that $|\om_j| < \text{max}\{|\om_N|, |\om_S|\}$ we find that for $E \in (E_{uh}, E_2)$, $\mJ_{3j}(E) = 0$ and the BEC holds in the upper band gap. However, for the lower band gap we find that each discontinuity has a contribution of 1 to either $\mJ_{2L}(E)$ or $\mJ_{2R}(E)$ depending on the sign of $\om_j$ for any $E \in (0, E_1)$, while $\mJ_{1j}(E)$ remains the same as in the unregularized case. Thus in the region $E \in (0, \om_j)$ we expect that the BEC holds when using the SC invariant $\fC_1 = -2$ only when one jump is present. This is indeed the case when $\om(y) = \text{sgn}(y)\om_N$ so that $\om_j = \om_N$, which is the case treated in e.g. \cite{gangaraj2018coupled}. Since $E_1 < |\om_N| <E_{uh}$ we find that the BEC in fact holds for this special case, shown in Figure \ref{fig:sc_reg_pos_jump}b.

\begin{figure}[b!]
    \centering
    \includegraphics[width = \textwidth]{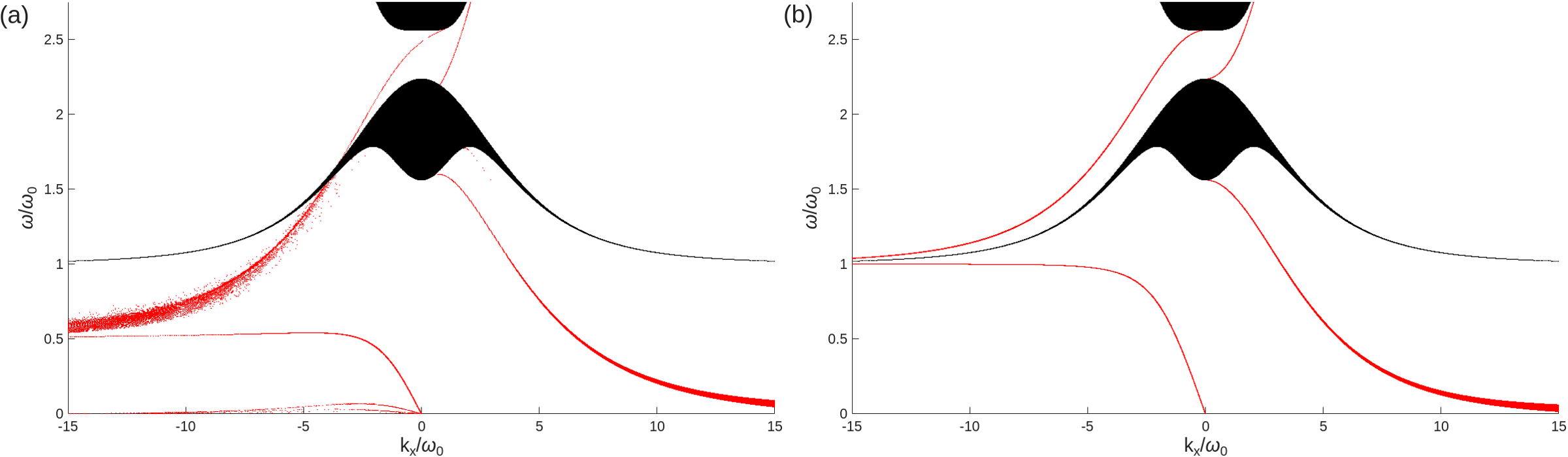}
\caption{Spectra calculated using SC regularization \eqref{eq:reg_SC} for a linear transition between $\om^S = -1$ THz, $\om^N = 1$ THz with a single positive jump at $y = 0$ and $k_c = 5$ THz. Areas of black indicate bulk spectrum and areas of red indicate edge spectrum. (a) shows $\om_j = 0.5$ THz and (b) $\om_j = 1.0$ THz. We can see the BEC is satisfied in the upper band gap and the region $(0, \om_j)$. $\op = 2$ THz, $\omega_0 = 1$ THz.}
\label{fig:sc_reg_pos_jump}
\end{figure}

\begin{figure}[t!]
    \centering
    \includegraphics[width = \textwidth]{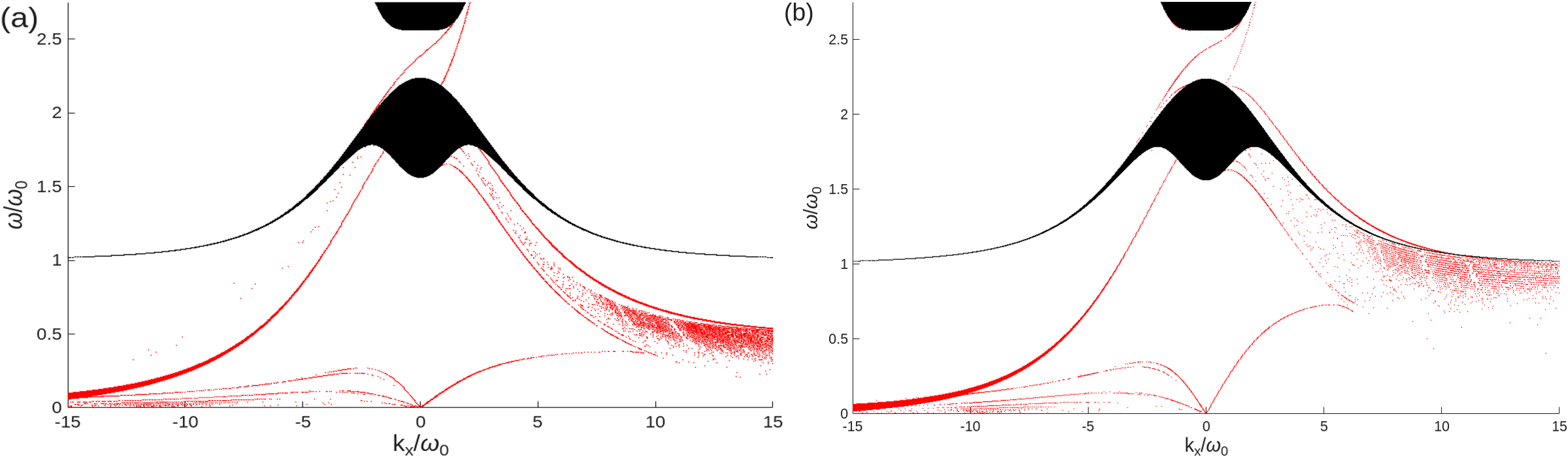}
\caption{Spectra calculated using SC regularization \eqref{eq:reg_SC} for a linear transition between $\om^S = -1$ THz, $\om^N = 1$ THz with a single positive jump at $y = 0$ and $k_c = 5$ THz. Areas of black indicate bulk spectrum and areas of red indicate edge spectrum. (a) shows $\om_j = -0.5$ THz and (b) $\om_j = -1.0$ THz. The BEC holds in the upper band gap while the lower band gap is effectively closed by a continuum of edge modes. $\op = 2$ THz, $\omega_0 = 1$ THz.}
\label{fig:sc_reg_neg_jump}
\end{figure}

\begin{figure}[h!]
    \centering
    \includegraphics[width = 0.6\textwidth]{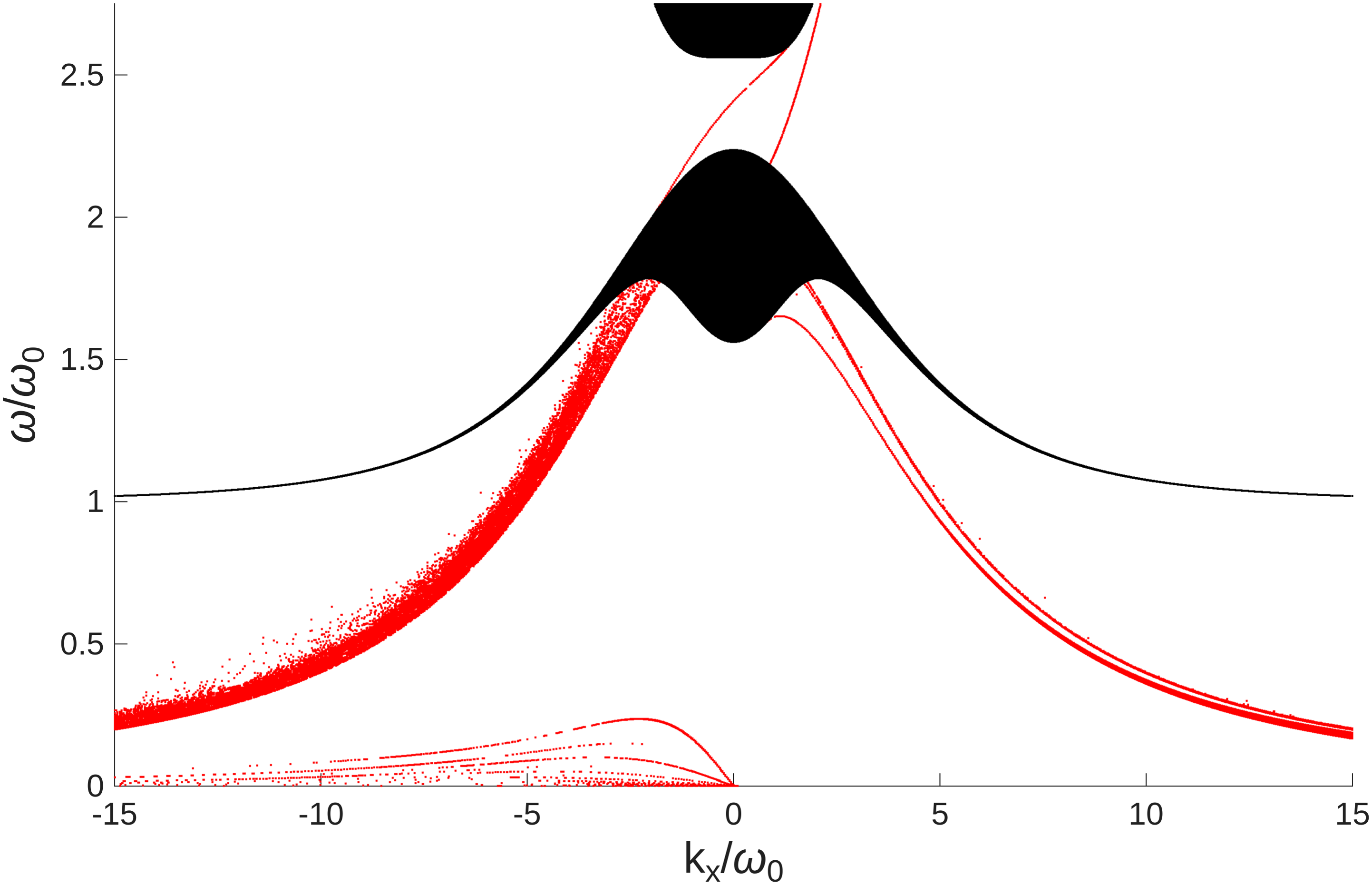}
\caption{SC regularization spectrum for a continuous linear transition between $\om^S = -1$ THz and $\om^N = 1$ THz with $k_c = 5$ THz. The lower band gap is effectively closed by the continuum of modes deformed to zero by the regularization as $k_x \to \infty$. $\op = 2$ THz, $\omega_0 = 1$ THz.}
\label{fig:sc_reg_linear}
\end{figure}

However, the picture becomes much less clear when considering all the edge modes in the general case $\om'(y) \ne 0$. The continuum of modes appearing in the region $(\op, E_{uh})$ for the unregularized system is now deformed into the lower band gap by the SC regularization. When considering a single positive jump, $\om_j> 0$ we can see from Figure \ref{fig:sc_reg_pos_jump} that this continuum of modes is contained above $\om_j$ so that for a single jump the BEC is satisfied in the regions $(0, \om_j)$ and $(E_{uh}, E_2)$ with the SC invariants $(\fC_1, \fC_2) = (-2, 2)$. In particular the piecewise case in Figure \ref{fig:sc_reg_pos_jump}b agrees with previous results \cite{silveirinha2016bulk, serra2025influence} and in the continuous case (Figure \ref{fig:sc_reg_linear}) the lower band gap effectively ceases to exist under the SC regularization. Interestingly, a single negative jump $\om_j < 0$ produces similar results to the continuous case, where the continuum of edge modes is deformed into the entirety of the lower band gap, making it impossible to evaluate the BEC there.  

Therefore we again observe in the SC regularized system that the BEC holds unambiguously (for any interface regularity) in the upper band gap for the well-defined BDI $\fC_2 = 2$. For the lower band gap the invariant $\fC_1 = -2$ is not a well-defined BDI. We see in the lower gap that the BEC is altered by discontinuities in the interface and in addition the regularization effectively alters the band gap such that it is impossible to evaluate the spectral flow when the boundary is continuous or $\om_j < 0$.

\section{Hydrodynamic model}\label{sec:hydro}

For a minimal model of non-local effects in the cold plasma model we consider the overall electron density as the average density $n_e$ (which does not depend on $t$) plus some small variation caused by the plasma oscillation, $\rho(t,x, y)$. Supplementing Maxwell's equations now with a hydrodynamic continuity equation \cite{pakniyat2022chern}, the linearized model of TM wave propagation in the hydrodynamic Drude model is represented by the pseudo-Hamiltonian \cite{serra2025influence}:
\begin{equation}\label{eq:H_TM_hydro}
\hat H^\beta(\bk) = \begin{pmatrix}
    0 & i\om & -i\op & 0 & 0 & \beta k_x\\
    -i\om & 0 & 0 & -i\op & 0 & \beta k_y\\
    i\op & 0 & 0 & 0 & -k_y & 0\\
    0 & i\op & 0 & 0 & k_x & 0\\
    0 & 0 & -k_y & k_x & 0 & 0\\
    \beta k_x & \beta k_y & 0 & 0 & 0 & 0
\end{pmatrix} 
\end{equation}
which acts on the components $(v_x, v_y, E_x, E_y, B_z, \rho)^T$. $\beta$ is a non-local parameter normalized to units of $c$ with typical values of $\beta \lesssim 10^{-2}$ \cite{hassani2019truly}. In reality $\beta$ is complex, with an imaginary part which models dissipative effects, however we consider a lossless setting here $\beta \in \Rm$ to ensure $\hat H^\beta(\bk)$ and $H^\beta_I$ are self-adjoint, a necessary condition for our analysis in Section \ref{sec:background} and the supporting theory.

$\hat H^\beta(\bk)$ has parity symmetry (see below) and therefore has three non-negative spectral bands and three non-positive bands. There are in fact two trivial bands $\tilde \omega = 0$ and four bands which are symmetric about zero \cite{serra2025influence}:
\begin{equation}\label{eq:bulk_hydro_1}
\tilde \omega_{\pm1}^2(k) = \frac 12 \left(2\op^2 + (1+\beta^2)k^2 +\om^2 -\sqrt{((1-\beta^2)k^2-\om^2)^2+4\op^2\om^2}\right)
\end{equation}
\begin{equation}\label{eq:bulk_hydro_2}
\tilde \omega_{\pm2}^2(k) = \frac 12 \left(2\op^2 + (1+\beta^2)k^2 + \om^2 +\sqrt{((1-\beta^2)k^2-\om^2)^2+4\op^2\om^2}\right).
\end{equation}

A number of symmetries, denoted below, are useful in our analysis of Chern numbers for the system. First, denoting $H(\theta) = \hat H^\beta(\bk)$ with $\bk$ at an angle $\theta$ from the $k_x$ axis, we find a continuous rotational symmetry:
\[
H(\theta) = \Gamma(\theta)H(0)\Gamma^*(\theta), \qquad \Gamma(\theta) = \text{diag} (R(\theta), R(\theta), I_2), 
\]
where $R(\theta)$ denotes the usual 2d rotation matrix by an angle $\theta$. This unitary equivalence shows that the spectrum of $\hat H^\beta(\bk)$ is invariant with respect to rotations of $\bk$. $\hat H^\beta(\bk)$ also obeys a parity symmetry:
\[
\Gamma_p\Gamma(\theta)^* H(\theta) \Gamma(\theta)\Gamma_p \equiv \Gamma_p(\theta)^*H(\theta)\Gamma_p(\theta) = -H(\theta), \qquad \Gamma_p = \text{diag}(-1, 1, 1, -1, 1, 1)
\]
so that the spectrum of $\hat H^\beta(\bk)$ is symmetric about 0. Finally, denoting $H(\theta, \om)$ to be $H(\theta)$ with a specific $\om$ value we find that:
\[
\Gamma_p(\theta)^*\Gamma_\om  H(\theta, \om) \Gamma_\om \Gamma_p(\theta) \equiv \Gamma_\om (\theta)^*H(\theta, \om)\Gamma_\om (\theta) = H(\theta, -\om), \ \Gamma_\om = \text{diag}(1, 1, -1, -1, 1, -1)
\]
so that the spectrum of $\hat H^\beta(\bk)$ is invariant with respect to a change in sign of $\om$. In particular we will it find useful that if $\psi$ is an eigenvector of $H(\theta, \om)$ then $\Gamma_\om(\theta)\psi$ is an eigenvector of $H(\theta, -\om)$ with the same eigenvalue.

\subsection{Chern numbers}
Integrals of Berry curvature of the eigenvectors corresponding to the bands $\tilde\omega_{\pm1}(k), \tilde \omega_{\pm2}(k)$ are $\mC_{\pm 1} = \mp \text{sgn}(\om_h)$, $\mC_{\pm 2} = \pm \text{sgn}(\om_h)$, $h \in \{N, S\}$ \cite{pakniyat2022chern, serra2025influence}. However, considering that the eigenvectors depend explicitly on $\bk$ as $|\bk| \to \infty$, these do not represent \textit{bona fide} Chern numbers, as the eigenvectors cannot continuously be mapped to a closed manifold. 

We again define the BDIs $\fC_{1, 2}$ for a transition between the phases $S$ ($\om < 0$) and $N$ ($\om >0$) (which we still must verify are well-defined Chern numbers) \cite{pakniyat2022chern}:
\[
\fC_1 = \sum_{j < 1}\mC_j^S -\sum_{j < 1} \mC_j^N = 0
\]
\[
\fC_2 = \sum_{j < 2}\mC_j^S -\sum_{j < 2} \mC_j^N = 2.
\]
From the explicit eigenvectors calculated by Serra and Silveirinha \cite{serra2025influence} we can deduce
\[
\lim_{k \to \infty} \psi_1(\theta = 0) = \frac 1{\sqrt{2}} \left(1, 0, 0, 0, 0, 1\right)^T \qquad \lim_{k \to \infty} \psi_2(\theta = 0) = \frac 1{\sqrt{2}} \left( 0, 0, 0, -1, -1, 0\right)^T.
\]
We can apply the unitary operator $\Gamma_\om(0)$ as derived above to show that $\lim_{k \to \infty}\psi_1^N(\theta =0) = -\lim_{k \to \infty} \psi_1^S(\theta =0)$ and $\lim_{k\to \infty}\psi_2^N(\theta=0) = \lim_{k \to \infty}\psi_2^S(\theta =0)$. Therefore we deduce that \eqref{eq:gluing} indeed holds for $\theta = 0$ and unitary transformation by $\Gamma(\theta)$ shows that \eqref{eq:gluing} holds for all $\theta \in \Sm^1$ so that $\fC_1, \fC_2$ are indeed well-defined Chern numbers for the hydrodynamic model without any need for regularization as $k \to \infty$.

\subsection{BEC in hydrodynamic model}

\begin{figure}[t!]
    \centering
    \includegraphics[width = \textwidth]{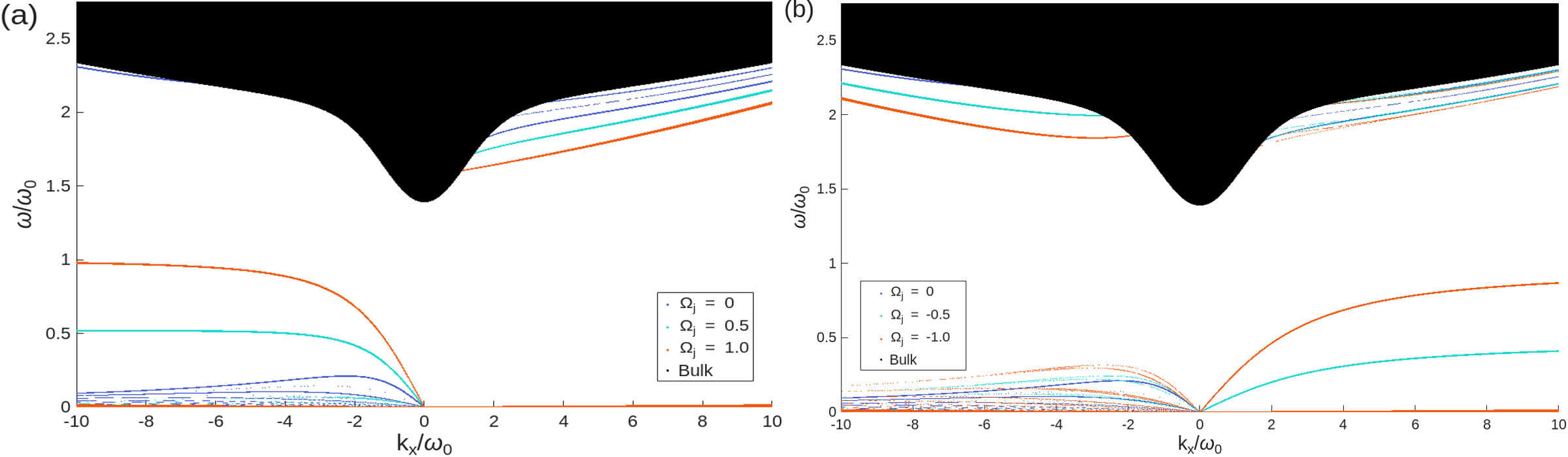}
\caption{Numerically calculated spectrum of $H^\beta_I(k_x)$ for a linear transition between $\om^S = -1$ THz and $\om^N = 1$ THz with a jump at $y = 0$ as in Figures \ref{fig:BEC_jumps} and \ref{fig:BEC_jumps_neg}. (a) shows positive values of $\om_j$ and (b) negative values of $\om_j$.  $\beta = 0.1$, $\op = 2$ THz, $\omega_0 = 1$ THz.}\label{fig:hydro}
\end{figure}

The BEC for the hydrodynamic model is modified in a similar manner as in the local case with a few important differences. First, from \eqref{eq:bulk_hydro_1} and \eqref{eq:bulk_hydro_2} we see that both positive bulk bands $\tilde \omega_{1, 2} \to \infty$ as $k \to \infty$, therefore the upper band gap is eliminated. We similarly term $\min\{\tilde \omega_1^N(0), \tilde \omega_1^S(0)\} = E_1$ so that the only band gap is now $(0, E_1)$. We find that $\omega_e^{(2)}(k_x) \to \infty$ as $k_x \to \infty$ so that $\omega_e^{(2)}$ can no longer contribute to the spectral flow in the lower band gap. Therefore we find that if $E \in (0, E_1)$ and $g$ is a connected interval $g \in (0, E_1) \backslash (\varepsilon_{1R} \cup \varepsilon_{1L})$ which contains $E$ then: 
    \begin{equation}\label{eq:BEC_hydro}
    2\pi \sigma_I(g) = \mJ_{1R}(E) -\mJ_{1L}(E)
    \end{equation}
where $\mJ_{1j}(E)$ and $\varepsilon_{1j}$, $j \in \{R, L\}$ are defined identically as in Section \ref{sec:cold_plasma_model}. Now although \eqref{eq:BEC_hydro} is identical to \eqref{eq:BEC_1_simp} the relation holds without any simplifying assumptions on $\om_j$. This result is demonstrated with simple linear transitions with a jump at $y = 0$ as in the previous sections in Figure \ref{fig:hydro}. Interestingly, in contrast to \cite{buddhiraju2020absence, hassani2019truly}, we find that introduction of a non-dissipative non-local term in the transition between two oppositely-biased plasmas still produces an asymptotically flat branch in the band gap as $k\to \infty$, although these works consider a slightly different case in which the transition is between a biased plasma and a topologically trivial transparent or opaque conductor.

\section{Discussion}\label{sec:discuss}

In the above we have defined and numerically demonstrated an anomalous BEC in TM wave propagation in a magnetically-biased continuum photonic system with finite width interface across which the cyclotron frequency $\om(y)$ varies. The number of edge modes is not only dependent on the well-defined Chern invariants $(\fC_1, \fC_2) = (0, 2)$ but also the number and amplitude of discontinuities in $\om(y)$. In the case that $\om(y)$ continuously transitions between two bulk values of $\om_S, \om_N$ in the region $|y|< 1$ we find that the BEC holds in both local and hydrodynamic models. When discontinuities in $\om(y)$ are introduced, we find at high wavenumbers that three edge modes ($\omega_e^{(1, 2, 3)}(k_x)$), whose eigenvectors are concentrated around the discontinuity, appear with flat asymptotic dispersions which may alter the spectral flow in the unregularized local and hydrodynamic models. Introduction of regularizations in the local model alter $\hat\omega_e^{(1, 2, 3)}$. In the BDI regularization $\hat\omega_e^{(1, 2, 3)}$ are altered in such a way that the BEC in fact holds regardless of any discontinuities in $\om(y)$. This picture is considerably more complicated in the SC regularization, in which the BEC holds for the upper band gap but the lower band gap is effectively altered in a way that is sensitively dependent on the size of the discontinuity and completely eliminated when $\om(y)$ is continuous.

Interestingly, the existence of $\omega_e^{(1, 2, 3)}(k_x)$ depends only on the existence of a jump in $\om(y)$ and not on any (bulk) topological transition. This suggests that a non-zero spectral flow may exist in a portion of the band gaps of the system whenever a sharp transition in $\om$ exists, even if no topological transition occurs. This behavior is shown in Figure \ref{fig:piecewise} for both local and hydrodynamic models with a sharp transition in $\om$ in which $\om_N, \om_S> 0$. Although these edge states are not ``topological" in the sense that they are not characterized by the bulk topological invariants $\fC_\ell$, they demonstrate the same robust asymmetric transport properties quantized by $\sigma_I(g)$ in an energy range $g$ determined by $\om_-$ and $\om_+$ via \eqref{eq:asym_freq}.

Our findings also show that considering a more general transition in $\om(y)$ across a thick interface $|y| < 1$ produces a more complicated spectral picture of photonic continua than that of a purely piecewise sharp transition in $\om$. One might expect for a sharp transition to be a valid approximation if the transition region is much smaller than the excitation wavelength. THz frequencies correspond to wavelengths $\lambda \approx 100\mu m$ ($c/\omega_0 = 300 \mu$m for Figures \ref{fig:cont_BEC}-\ref{fig:piecewise}) so for precisely fabricated transitions between two separate media this may be an appropriate assumption, while variations in magnetic field bias may be wider. In any case, even in magnetic sources designed to provide a constant magnetic field, some non-zero variation in field strength will occur \cite{martin2023magnetic} which we have shown will manifest in additional spectral bands which were not analyzed before.

Whether or not the transition is modeled as sharp or continuous, we have shown that the edge states which appear and energy regions which support a non-zero spectral flow depend sensitively on discontinuities in $\om(y)$ and we precisely quantized these anomalies in the BEC. While we have limited our analysis only to transitions between two magnetically biased photonic continua which are well-approximated by the cold plasma model, previous results suggest that an even richer area  of phenomena may exist in continuous transitions between more general continuum photonic models.

\begin{figure}[t!]
    \centering
    \includegraphics[width = \textwidth]{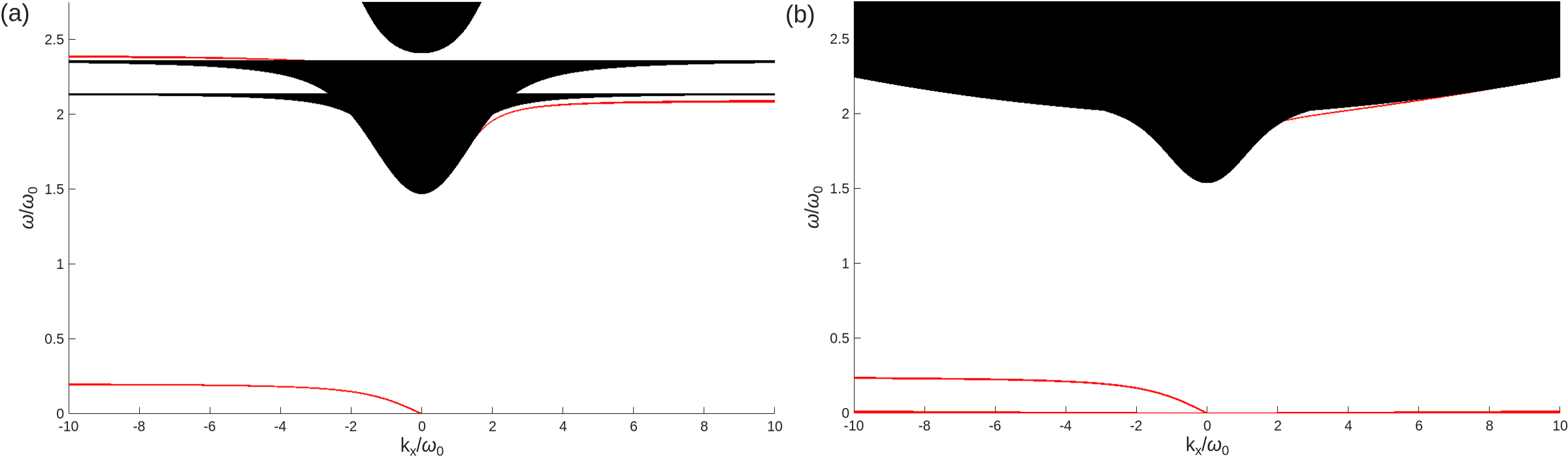}
\caption{Spectrum for a piecewise constant $\om(y)$ which does not undergo a topological phase transition for both the local model (a) and the hydrodynamic model (b). Areas of black indicate bulk spectrum and areas of red edge spectrum. For the local model $\om_S = 0.75$ THz and $\om_N = 1.25$ THz and for the hydrodynamic model $\om_S = 0.3$ THz and $\om_N = 0.8$ THz. A spectral flow of -1 appears in the lower band gap in the region $(0, \hat\omega_e^{(1)})$ for both models and in the region $(E_{uh}, \hat\omega_e^{(3)})$ for the local model. $\op = 2$ THz, $\omega_0 = 1$ THz, and $\beta = 0.1$.}\label{fig:piecewise}
\end{figure}

\section{Acknowledgements}
The authors acknowledge Jeremy Hoskins for invaluable advice and input regarding numerical techniques. This work was funded in part by NSF grant DMS-230641 and ONR
grant N00014-26-1-2017.

\section{Data Availability}\label{sec:data}
The code needed to reproduce all figures in this work is available at DOI: 10.5281/zenodo.20058114.

\appendix

\section{Interface current observable $\sigma_I$}\label{sec:sigma_I}

In this section we give a precise definition of the interface current observable $\sigma_I$. We assume as in Section \ref{sec:background} that $H_I(x, y, D_x, D_y)$ is a self-adjoint operator on an appropriate Hilbert space $\mH$ which satisfies $H_I(x, y, D_x, D_y) = H^N(D_x, D_y)$ for $y \ge 1$ and $H_I(x, y, D_x, D_y) = H^S(D_x, D_y)$ for $y \le -1$. We further assume that $H^N, H^S$ have a common global spectral gap $(E_0, E_1)$. See Appendix \ref{sec:5x5_analysis} for the definition of $\mH$ for the cold plasma model.

Let $P=P(x)\in C^\infty(\Rm)$ be a function that depends only on the spatial coordinate $x$ with  $P(x)=0$ for $x<x_0-\delta$ and $P(x)=1$ for $x>x_0+\delta$ for some $x_0\in\Rm$ and $\delta>0$. The function $P(x)$ should be interpreted as the observable quantifying the field density in the (right) half-space $x\geq x_0$. Then $i[H,P]$ with $[A,B]=AB-BA$ the standard commutator, may naturally be interpreted as a current operator modeling transfer of field density  per unit time across the (thick) vertical line where $P$ transitions from $0$ to $1$.

Since $H^N, H^S$ are insulators in the region $(E_0, E_1)$, any excitation generated in such an interval will be confined to the vicinity of the interface $y\approx0$ separating $H^N$ and $H^S$. We define $0\leq\varphi\in C^\infty(\Rm)$ as a monotonic function such that $\varphi(E)=0$ for $E\leq E_0$ and $\varphi(E)=1$ for $E\geq E_1$. Thus $\varphi'(H_I)\ge 0$ defines a density of states of modes that cannot propagate into the $N$ and $S$ bulks. The expectation value of the current observable $i[H_I,P]$ for a density of states $\varphi'(H_I)$ is then defined as:
\begin{equation}\label{eq:sigmaI}
  \sigma_I [H_I] = {\rm Tr} \ i[H_I,P] \varphi'(H_I)
\end{equation}
assuming that $i[H_I,P] \varphi'(H_I)$ is a trace-class operator. Here, ${\rm Tr}$ is the standard trace on the Hilbert space $\mH$ where $H_I$ is defined. We refer to $\sigma_I$ as an interface (current) observable. It is the physical object describing asymmetric transport along an interface, which here is modeled as the $x-$axis. This invariant can be interpreted as the spectral flow across the insulating gap $(E_0, E_1)$ \cite[\S3]{bal2024continuous}, which is how we evaluate it from numerical spectral calculations.

Under the appropriate assumptions $\sigma_I[H_I]$ is in fact independent of the particular choice $P(x)$ and $\varphi(E)$ as long as they satisfy the above criteria \cite{bal2022topological}. In particular we assume invariance in the $x-$direction so that the choice of $x_0, \delta$ is immaterial. Therefore only the support of $\varphi'(E)$, which is the interval $g$ in the main text, matters when computing $\sigma_I$, allowing us to define $\sigma_I(g)$ which is defined by \eqref{eq:sigmaI} with any function $\varphi$ satisfying the above conditions and for which ${\rm supp}(\varphi') = g$. 

\section{Numerical methods}\label{sec:numerics}
Code for implementing the following method is freely available for download at the DOI provided in Section \ref{sec:data} \cite{frazier2026code}. Permuting the basis of $\psi$ in Section \ref{sec:cold_plasma_model} $(v_x, v_y, E_x, E_y, B_z) \to (E_x, B_z, v_x, E_y, v_y)$ from \eqref{eq:HTMTE} we get:
\[
H_I = \begin{pmatrix}
-D_y \sigma_1 & B\\
B^* & C
\end{pmatrix}
\]
for:
\[
\sigma_1 = \begin{pmatrix}
    0 & 1\\
    1 & 0
\end{pmatrix}, \quad
B = \begin{pmatrix}
    i\op& 0 & 0 \\
    0 & k_x & 0
\end{pmatrix}, \quad
C = \begin{pmatrix}
    0 & 0 & i\om(y)\\
    0 & 0 & i\op\\
    -i\om(y) & -i\op & 0
\end{pmatrix}.
\]
Separating $\psi = (\psi_1, \psi_2)^T$ and rearranging we get:
\[
\begin{pmatrix}
A & B\\
B^* & D
\end{pmatrix}
\begin{pmatrix}
    \psi_1\\
    \psi_2
\end{pmatrix} = -i\begin{pmatrix} 
\sigma_1 & 0 \\
0 & 0
\end{pmatrix} \partial_y\begin{pmatrix}\psi_1\\\psi_2\end{pmatrix}
\]
for $A = -\omega I_2, \; D = C-\omega I_3$. Assuming that $D$ is invertible and taking the Schur complement in $D$ gives:
\begin{equation}\label{eq:ode}
i\sigma_1(A-BD^{-1}B^*)\psi_1=: V(k_x, \omega)\psi_1(y)= \partial_y \psi_1(y).
\end{equation}
An identical procedure can be used to define $V(k_x, \omega)$ in this way for the Hamiltonian $H_I^\beta$ from \eqref{eq:H_TM_hydro}. This is now an ODE which can be solved using ubiquitous Runge-Kutta or other standard numerical ODE solving methods. Assuming that for large enough $|y|$, the coefficients of $V(k_x, \omega)$ are constant and given by the constant-coefficient operator $V^{N}(k_x, \omega), V^S(k_x, \omega)$ for $y\geq 1$ and $y \leq -1$ respectively, then for any fixed $(k_x, \omega)$ we can determine the bulk modes by simply diagonalizing $V^{N}, V^S$. In general, however, an explicit solution for $(\psi(y),\omega)$ is not feasible analytically for the ODE \eqref{eq:ode}. A numerical ODE solver is used to solve \eqref{eq:ode} for $\psi(y=0)$ by using the eigenvectors of $V^N$ and $V^S$ as initial conditions at $y = \pm 1$ respectively and numerically solving for $\psi(y)$ for $-1< y < 0$ and $0 < y < 1$ respectively. Exponentially increasing eigenvectors from the bulk spectrum of $V^N$ and $V^S$ (eigenvectors whose eigenvalues have real parts $>0$ for $V^N$ and $< 0$ for $V^S$) are eliminated as non-physical. If the subspaces of valid (non-exponentially increasing) eigenvectors of $V^N, V^S$ intersect non-trivially once evaluated at $y = \pm 0$ then there exists a valid solution of \eqref{eq:ode}. More precisely, orthogonal projectors $\Pi^N, \Pi^S$ onto the subspaces of valid eigenvectors of $V^N, V^S$ evaluated at $y = 0$ are formed, and a $(k_x, \omega)$ pair is accepted if the largest eigenvalue of $(\Pi^N\Pi^S)$ is within a small tolerance ($<10^{-5}$ for all figures in the main text) of 1. When compared with widely-used finite difference methods \cite{fu2021topological, bal2024topological, bal2022multiscale} this method has no need for periodization of the domain and therefore does not require the heuristic elimination of modes \cite{bal2024topological, bal2022multiscale} or alternatively the necessity of modeling two equal but opposite transitions \cite{fu2021topological, zhang2011spontaneous}. This method shares these advantages with two previously-proposed approaches \cite{colbrook2019compute, colbrook2023computing} but may also be applied to continuum models as well as periodic lattice models. Comparison with finite-difference methods (not displayed here) does however show that the two methods agree using a fine enough mesh with finite differences. 

There are two issues which arise using this method that are worth discussion. First, we note that some of the constraints are purely algebraic (represented by the first 3 rows of \eqref{eq:HTMTE} or $D$ in \eqref{eq:ode}), while the remaining contain derivatives. Strictly speaking our system then represents a differential algebraic equation, whose numerical solutions have convergence and stability properties which in general require much more analysis beyond traditional results for numerical ODE solvers (see \cite{kunkel2006differential} for an overview of differential algebraic equations and their numerical solutions). However in our case the fact that $D$ is non-singular whenever $\omega\ne \pm\sqrt{\op^2+\om^2}, 0$, which in particular is termed a ``strangeness free" differential algebraic equation with no nilpotent parts, allows us to circumvent these additional difficulties. The issue which remains is that inversion of $D$ is ill-conditioned whenever $\omega$ is close to $\sqrt{\op^2+\om^2}$ or 0. The resulting error will depend in particular on which eigenvectors of $(\Pi^N\Pi^S)$ coincide with directions which are close to the kernel of $D$ when $\omega = \sqrt{\op^2+\om^2}, 0$ and therefore it is hard to predict which branches of spectrum are most affected by this error. However we can see it manifested in the thickness of calculated spectral branches in the edge spectrum in Figures \ref{fig:BEC_jumps}- \ref{fig:BEC_2_jumps}. In particular the edge modes $\hat \omega_e^{(2, 3)}(k_x)$ become thicker as $|k_x|$ becomes large. The bulk modes are calculated analytically and therefore are unaffected by the fact that $D$ is singular for certain $\omega$ values in the bulk spectrum.

The second issue stems from the fact that this is a grid-based method which calculates whether a particular $(k_x, \omega)$ point approximates a point in the spectrum of $H_I(k_x)$ rather than producing a discrete approximation of the spectrum of $H_I(k_x)$ itself. In areas where discrete branches of spectrum are densely packed, as we see near $\omega = 0$ and $\omega\in (\op, \sqrt{\op^2+\om^2})$ our method would need a grid spacing finer than the density of states in order to resolve each branch, as opposed to finite difference approximations, which instead approximate a fixed, finite number of eigenvalues for each $k_x$ value. Compounding this issue is the fact that inversion of $D$ is ill-conditioned near these areas of high spectral density in our case so that our numerical method performs poorly in the regions near $\omega = 0, (\op, \sqrt{\op^2+\om^2})$. However, exactly because $0, (\op, \sqrt{\op^2+\om^2})$ are regions of bulk spectrum, these areas of poor performance do not affect our analysis of the spectral flow and BEC except for the SC regularization, which deforms these spectral bands into the lower band gap.

\section{Analysis of Local Model}\label{sec:5x5_analysis}
In this section we analytically derive the existence of the edge states $\omega_e^{(j)}(k_x)$ for high wavenumbers and prove the results \eqref{eq:BEC_1},\eqref{eq:BEC_2}, and \eqref{eq:BEC_1_simp} in limited cases.

First, in order to ensure $H_I(k_x)$ is self-adjoint we must precisely define its domain. Assuming our underlying Hilbert space to be $L^2(\Rm;\Cm^5)$ we can deduce the domain simply by enforcing $\langle\phi, H_I(k_x)\psi\rangle =\langle H_I(k_x)\phi, \psi\rangle\; \forall \phi, \psi \in \mathcal{D}(H_I(k_x)) $, from which we deduce that $E_x(y), B_z(y) \in H^1(\Rm)$ and $v_x(y), v_y(y), E_y(y)\in L^2(\Rm)$. From Morrey's inequality \cite{hunter2001applied} we get that $E_x(y), B_z(y) \in C_0(\Rm)$ while the remaining coordinates may in general be discontinuous. This will be important in our analysis below.

\paragraph{Piecewise $\om(y)$.}
We first analyze the case where $\om(y)$ is piecewise constant, so that $\om(y) = \om_N$ for $y\ge 0$ and $\om(y) = \om_S$ for $y< 0$. We can also express this as $\om(y) = \om_a +\text{sgn}(y)\om_j$ for the average and half-jump values:
\[
\om_a = \frac{\om_S+\om_N}{2}\qquad \om_j = \frac{\om_N-\om_S}{2}.
\]
For edge states we assume solutions of the form $\psi(y) = e^{\kappa_S y}\psi_0^S$ for $y< 0$ and $\psi(y) = e^{-\kappa_Ny}\psi_0^N$ for $y\ge 0$, $\kappa_N,\kappa_S > 0$. With solutions of this kind we can substitute $-D_y \to i\kappa$ in \eqref{eq:HTMTE} along with $D_x \to k_x$. We then solve for $\kappa$ in the equation $\det(H_I(\kappa)-E)=0$ which gives:
\begin{equation}\label{eq:kappa}
\kappa^2 = -\frac{E^4-E^2(\omega_h^2+k_x^2)+k_x^2\omega_{uh}^2 + \op^4}{E^2-\omega_{uh}^2}=k_x^2-\frac{E^4-\omega_h^2E^2+\op^2}{E^2-\omega_{uh}^2},
\end{equation}
defining $\omega_{uh}^2 =\op^2+\om^2$ and $\omega_h^2 = 2\op^2+\om^2$. Therefore we have an equation for $\kappa_N, \kappa_S$. In order to satisfy $H_I\psi = E\psi$ at $y = 0$ we introduce the notation $[u] = u(0^+)-u(0^-)$ and apply $[H_I\psi] = E[\psi]$. Recall from above that in order for $H_I(k_x)$ to be self-adjoint $E_x(y), B_z(y)$ must be continuous while $v_x(y), v_y(y), E_y(y)$ may in general have discontinuities. Therefore we obtain:
\[
i[\om v_y] = E[v_x], \quad -i[\om v_x] + i\op[E_y] = E[v_y], \quad -i\op[v_x] +i[\kappa]B_z(0) = 0
\]
\[
-i\op [v_y]  = E[E_y], \quad i[\kappa]E_x(0) + k_x[E_y] = 0.
\]
In general solutions for $E$ involve substituting \eqref{eq:kappa} into the above system and solving for $E$ and are quite intractable. Conveniently in \eqref{eq:ode} we have eliminated all coordinates except $\psi_1 = (E_x, B_z)$, which must be continuous, and since the constraints on $v_x, v_y, E_y$ are purely algebraic we can obtain a unique eigenvector by solving \eqref{eq:ode} as long as $E\ne 0, \pm \sqrt{\om^2 + \op^2}$ (values for which $D$ is not invertible). Calculating $V(k_x, E)$ from \eqref{eq:ode} explicitly gives:
\begin{equation*}%\label{eq:edge_ode}
V(k_x, E) = \frac 1{E(E^2-\op^2-\om^2)}
\begin{pmatrix}
    \op^2\om k_x & -i(E^4-E^2(k_x^2+\op^2+\om^2) + k_x^2 \om^2)\\
    -i(E^4-E^2(2\op^2+\om^2) + \op^4) & -\op^2\om k_x
\end{pmatrix}.
\end{equation*}
Notice from \eqref{eq:ode} that the eigenvalues of $V(k_x, E)$ should be exactly $\kappa$ as defined in \eqref{eq:kappa}, which is readily confirmed by straightforward diagonalization of $V$. Our goal now is to solve $[V(k_x, E)\psi_1] = [\partial_y \psi_1]$ for the exponentially decaying solutions posited above, remembering that $\psi_1 = (E_x, B_z)$ must be continuous at $y = 0$. 

First consider the symmetric case in which $\om(y) = \text{sgn}(y)\om_j$ and note that the off-diagonal terms do not depend on the sign of $\om$. We can deduce edge state dispersions in this case:
\[E^4-E^2(k_x^2+\op^2+\om_j^2) + k_x^2 \om_j^2 = 0 \Rightarrow \]
\begin{equation*}\label{eq:edge_dis_1}
E_\pm^2(k_x) = \frac 12 \left(k_x^2 + \omega_{uh}^2 \pm \sqrt{(k_x^2 + \omega_{uh}^2)^2 -4 \om_j^2k_x^2}\right) = \frac 12 \left(k_x^2 + \omega_{uh}^2 \pm \sqrt{(k_x^2 + \op^2-\om_j^2)^2 +4 \op^2\om_j^2}\right)
\end{equation*}
for edge state solutions:
\[
\psi_1 = e^{\kappa y}\begin{pmatrix}
    0\\
    B_{0}
\end{pmatrix}, \qquad \kappa = -\frac{ k_x \op^2 \om_j}{E_\pm(k_x)(E_\pm(k_x)^2-\op^2-\om_j^2)}.
\]
Note that $E_+ > \sqrt{\op^2+\om_j^2}$ and $E_- < \sqrt{\op^2+\om_j^2}$ so that if $\om_j > 0$ the $E_+$ branch can only exist when $k_x > 0$ and the $E_-$ branch can only exist when $k_x < 0$ to ensure $\kappa < 0$ (and vice versa if $\om_j < 0$). Similarly we have edge state dispersions:
\[
E^4-E^2(2\op^2+\om_j^2) + \op^4 = 0 \Rightarrow E = \tilde\omega_1(0), \tilde\omega_2(0)
\]
for solutions:
\[
\psi_1 = e^{\kappa y}\begin{pmatrix}
    E_0\\
    0
\end{pmatrix}, \qquad \kappa = \frac{ k_x \op^2 \om_j}{\tilde\omega_{1, 2}(0)(\tilde\omega_{1, 2}^2(0)-\op^2-\om_j^2)}
\]
where again $\tilde\omega_1(0) < \sqrt{\op^2+\om_j^2}$ and $\tilde\omega_2(0) > \sqrt{\op^2+\om_j^2}$ so that the $\tilde\omega_1$ branch is defined only for $k_x > 0$ and the $\tilde \omega_2$ branch is defined only for $k_x < 0$. Setting these expressions for $\kappa$ equal to \eqref{eq:kappa} and substituting the dispersion relations confirms that the two expressions for $\kappa$ agree. Therefore we have deduced four branches of edge spectrum $\omega_e^{(1)}(k_x\le 0) = E_-(k_x)$, $\omega_e^{(2)}(k_x\ge0) = \tilde \omega_1(0)$, $\omega_e^{(3)}(k_x\le 0) = \tilde\omega_2(0)$, and $\omega_e^{(4)}(k_x \ge 0) = E_+(k_x)$ in the symmetric case $\om(y) = \text{sgn}(y) \om_j$ for $\om_j > 0$. The signs of $k_x$ are reversed for $\om_j < 0$. Note that $\omega_e^{(4)}(k_x)$ always converges to $\infty$ as $|k_x| \to \infty$ so that it always contributes $\pm1$ to the spectral flow of the upper band gap (depending on the sign $\pm$ of $\om_j$). Meanwhile $\omega_e^{(1, 2, 3)}(k_x)$ all have finite limits as $|k_x| \to \infty$ which depend on $\om_j$.

Unfortunately the general piecewise problem $\om(y) = \om_a + \text{sgn}(y)\om_j$ does not present such tractable solutions. Instead notice that the general piecewise case is a bounded perturbation such that if $H_0(k_x)$ is the symmetric case $\om(y) = \text{sgn}(y)\om_j$ we obtain the general piecewise case as $H_I(k_x) = H_0(k_x) -i\text{sgn}(y)\text{diag}(\sigma_2, 0, 0, 0)\om_a$. Standard results (see e.g. Theorem 3.6 of \cite[\S IV.3]{kato1966perturbation}) show that the spectrum of a closed operator under bounded perturbations with which it commutes is continuous. Therefore since $E_\pm(k_x), \tilde \omega_{1,2}(0)$ are in the point spectrum of $H_0(k_x)$, for a value of $\om_a$ which is not too large we also have eigenvectors of $H_I(k_x)$, $(\omega_e^{(1)}(k_x), \omega_e^{(2)}(k_x), \omega_e^{(3)}(k_x), \omega_e^{(4)}(k_x))$ which are close to $(E_-(k_x), \tilde\omega_1(0), \tilde\omega_2(0), E_+(k_x))$ respectively. In particular since the norm of our bounded perturbation to $H_0(k_x)$ is $|\om_a|$ we have that the distance from the eigenvalues of $H_I(k_x)$ to the eigenvalues of $H_0(k_x)$ is less than or equal to $|\om_a|$. Thus we know that $\omega_e^{(1)}(k_x)$ exists if $|\om_a|< |\om_j|$, $\omega_e^{(2)}(k_x)$ exists when $|\om_a| < \sqrt{\op^2+\om_S^2} - \tilde\omega_1(0)$ and $\omega_e^{(3)}(k_x)$ exists when $|\om_a| < \tilde\omega_2(0)-E_{uh}$. Our numerical spectral calculations in Figures \ref{fig:BEC_jumps} and \ref{fig:BEC_jumps_neg} show that these branches in fact exist for any $\om(y)$ which changes sign.

We can however do even better and determine the asymptotic behavior of $\omega_e^{1, 2, 3}(k_x)$ whenever they exist. Notice from \eqref{eq:kappa} that as $\lim_{k_x \to \pm\infty}\kappa =|k_x|$ assuming that $E$ remains bounded. Solving $[V(k_x, E)\psi_1] = [\partial_y \psi_1]$ for exponentially decaying solutions for which $\psi_1$ is continuous, in the limit as $k_x \to \infty$ then gives (using the notation $\lim_{k_x\to \pm \infty}\omega_e^{(i)}(k_x) = \hat \omega_e^{(i)})$:
\[
(\hat \omega_e^{(i)})^6 -(\hat \omega_e^{(i)})^4(2\op^2+\om_N^2+\om_S^2) + (\hat \omega_e^{(i)})^2 (\op^2+\om_N^2)(\op^2+\om_S^2) -\op^4\om_j^2 = 0
\]
Explicit expressions for the zeros of this polynomial (cubic in $E^2$) are available and take the form:
\[
(\hat \omega_e^{(1)})^2 = \frac{2\op^2+\om_N^2+\om_S^2}{3} - \frac23\sqrt{q}\cos\left(\frac 13(\theta+\pi)\right)
\]
\[
(\hat \omega_e^{(2)})^2 = \frac{2\op^2+\om_N^2+\om_S^2}{3} - \frac23\sqrt{q}\cos\left(\frac 13(\theta-\pi)\right)
\]
\[
(\hat \omega_e^{(3)})^2 = \frac{2\op^2+\om_N^2+\om_S^2}{3} + \frac23\sqrt{q}\cos\left(\frac \theta3\right)
\]
\[
q = (\op^2+\om_N^2)^2 + (\op^2+\om_S^2)^2 -(\op^2+\om_N^2)(\op^2+\om_S^2), \quad \theta \in [0, \pi).
\]
With $\theta \in [0, \pi)$ we get that $\cos(\theta/3) \in (1/2, 1]$ and $\cos(1/3(\theta-\pi))\in [-1/2, 1/2)$. Assuming WLOG that $\om_S^2\le \om_N^2$ we get from the above expressions that $\hat\omega_e^{(3)}>\sqrt{\op^2+\om_S^2}$ and $\hat \omega_e^{(1)} \le \hat\omega_e^{(2)} \le \sqrt{\op^2+\om_N^2}$ so that $\omega_e^{(3)}$ can only contribute to the spectral flow of the upper band gap and $\omega_e^{(1,2)}$ can only contribute to the spectral flow of the lower band gap. Thus, counting the spectral flows contributed by $\omega_e^{(i)}(k_x)$, $i \in \{1, 2, 3, 4\}$ using $\omega_e^{(i)}(0)$ and $\hat \omega_e^{(i)}$ for the appropriate sign of $k_x$, we have proved the relations \eqref{eq:BEC_1} and \eqref{eq:BEC_2} for any case where $\om(y) = \om_a + \text{sgn}(y) \om_j$ with $|\om_a|$ not too large.

\paragraph{General $\om(y)$.}We now extend these results to the case where $\om(y)$ is a general piecewise continuous function with a discontinuity at $y_0 \in (-1,1)$. We assume that $\om(y_0^\pm) = \om_\pm$. The main result is Theorem \ref{thm:existence_edge}, which proves the existence of branches of spectrum $E_j(k_x)$, $j \in \{1, 2, 3\}$ which converge to $\omega_e^{(j)}(k_x)$ as $|k_x| \to \infty$, where $\omega_e^{(j)}(k_x)$ are defined identically to the last paragraph from \eqref{eq:asym_freq} with $\om_- = \om_S$ and $\om_+ = \om_N$. The remaining ingredient to proving the relations \eqref{eq:BEC_1}, \eqref{eq:BEC_2}, and \eqref{eq:BEC_1_simp} are the behavior of the branches $E_j(k_x)$ as $k_x \to 0$, allowing us to determine the spectral flow contributed by each branch as stated in the main results. This would involve explicit analysis of the spectrum of $H_I$ for low $|k_x|$ and therefore we leave this as a conjecture which is confirmed by numerical spectral calculations.

\begin{theorem}\label{thm:existence_edge}
Suppose that $\om(y)$ has a discontinuity at $-1< y_0< 1$ with $\om(y_0^\pm) = \om_\pm$ and $\om'(y) = 0$ in a small region $y \in [y_0-2\eta, y_0+2\eta]$. Then there exist solutions $(\psi_{j, k_x}(y), E_j(k_x))$, $j \in\{1, 2, 3\}$ to \eqref{eq:HTMTE} for which $\psi_{j,k_x}(y)$ decays exponentially away from $y = y_0$ and for which:
\[
|E_j(k_x) - \omega_e^{(j)}(k_x)| \le Ce^{-\alpha|k_x|}
\]
where $\omega_e^{(j)}(k_x)$ are the branches of edge spectrum for the piecewise constant problem with $\om_N = \om_+$, $\om_S = \om_-$ and $C, \alpha > 0$ are appropriately chosen constants which do not depend on $k_x$. $E_1, E_3$ are defined for $k_x \le 0$ and $E_2$ for $k_x \ge 0$ when $\om_+ > \om_-$ and vice versa for $\om_+ < \om_-$. If we assume instead that $\om(y)$ is locally Lipschitz in the region $[y_0-2\eta, y_0 + 2\eta]$ then $E_j(k_x) \to \hat \omega_e^{(j)}$ but the convergence to $\omega_e^{(j)}(k_x)$ is no longer exponential.
\end{theorem}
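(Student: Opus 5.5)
Our plan is a quasimode (WKB-type) construction based on the first-order ODE \eqref{eq:ode} for $\psi_1=(E_x,B_z)$, followed by a spectral argument that turns approximate eigenpairs into true eigenvalues. \textbf{Step 1 (truncated piecewise eigenvector).} For the piecewise constant problem with $\om_S=\om_-$, $\om_N=\om_+$, the preceding paragraph gives eigenpairs $(\phi_{j,k_x},\omega_e^{(j)}(k_x))$. They exist on the half-line of $k_x$ fixed by the sign of $\om_+-\om_-$. Each $\psi_1$-component has the form $e^{-\kappa_\pm|y-y_0|}$ times a fixed vector, where $\kappa_\pm$ are given by \eqref{eq:kappa}. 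Since $\omega_e^{(j)}(k_x)$ stays bounded and away from $0$ and $\sqrt{\op^2+\om_\pm^2}$, \eqref{eq:kappa} gives $\kappa_\pm=|k_x|+O(1/|k_x|)$. We then recover the remaining components $(v_x,v_y,E_y)$ algebraically through $D^{-1}$.

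\textbf{Step 2 (cutoff).} Let $\chi$ be a smooth cutoff with $\chi=1$ on $[y_0-\eta,y_0+\eta]$ and support in $[y_0-2\eta,y_0+2\eta]$. Set $\tilde\psi=\chi\phi_{j,k_x}$. On the support of $\chi$ we have $\om(y)=\om_\pm$ exactly, so $H_I(k_x)$ agrees with the piecewise operator there. Hence
\[
(H_I(k_x)-\omega_e^{(j)}(k_x))\tilde\psi=[H_I,\chi]\phi_{j,k_x},
\]
and the commutator involves only $\chi'$ in the $D_y$ entries. It is supported where $|y-y_0|\ge\eta$, so its norm is $O(e^{-\kappa\eta})=O(e^{-\alpha|k_x|})$ relative to $\|\tilde\psi\|$. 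Note that $\chi$ is smooth, so $\tilde\psi$ stays in the domain: its $E_x$ and $B_z$ components remain continuous at $y_0$.

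\textbf{Step 3 (from quasimode to eigenvalue).} Since $H_I(k_x)$ is self-adjoint, the spectral theorem gives $\mathrm{dist}(\omega_e^{(j)}(k_x),\sigma(H_I(k_x)))\le Ce^{-\alpha|k_x|}$. This is the main obstacle: we need this spectrum point to be an isolated eigenvalue with a decaying eigenfunction, and not a piece of essential spectrum.
\begin{itemize}
\item For $j=1,2$ whose limits lie in the gaps $(0,E_1)$ or $(E_{uh},E_2)$, the essential spectrum of $H_I(k_x)$ at large $|k_x|$ is built from the bulk bands at $(k_x,k_y)$. These sit away from $\hat\omega_e^{(j)}$, except for the flat bands $0$ and $\sqrt{\op^2+\om(y)^2}$, which arise from the local upper hybrid continuum.
\item We would first check that $\hat\omega_e^{(j)}$ avoids the range of $\sqrt{\op^2+\om(y)^2}$ for $y\in\Rm$. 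In the generic case we can then invoke discreteness of the spectrum near $\omega_e^{(j)}$.
\item Alternatively, we can avoid abstract spectral theory altogether and work with the ODE directly. Integrate \eqref{eq:ode} from $y=\pm\infty$ (decaying bulk solutions) to $y_0\pm\eta$ using the true $\om(y)$. Then form the $2\times2$ matching determinant $F(E,k_x)$ at $y_0$, which plays the role of the Evans function.
\item The quasimode estimate shows that $F$ differs from the piecewise determinant $F_0$ by $O(e^{-\alpha|k_x|})$ on a small complex disk around $\omega_e^{(j)}$. Here we use that the decaying and growing modes in the constant region separate at rate $e^{2|k_x|\eta}$, so the contribution from outside the constant region is exponentially suppressed.
\item Provided $F_0$ has a simple zero at $\omega_e^{(j)}$, Rouché's theorem yields a zero $E_j(k_x)$ of $F$ within $Ce^{-\alpha|k_x|}$, together with an eigenfunction decaying away from $y_0$. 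Simplicity of the zero follows from the three distinct roots of \eqref{eq:asym_freq}.
\end{itemize}

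\textbf{Step 4 (Lipschitz case).} If $\om$ is only locally Lipschitz near $y_0$, then $\om(y)-\om_\pm=O(|y-y_0|)$. The perturbation $(\om(y)-\om_\pm)$ acting on a mode of width $1/|k_x|$ has size $O(1/|k_x|)$, so the same Rouché argument gives $E_j(k_x)-\omega_e^{(j)}(k_x)=O(1/|k_x|)$. Combined with $\omega_e^{(j)}(k_x)\to\hat\omega_e^{(j)}$, this yields convergence to $\hat\omega_e^{(j)}$, but only at an algebraic rather than exponential rate.
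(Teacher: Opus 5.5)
Your Steps 1, 2 and 4, together with the distance-to-spectrum part of Step 3, are the paper's proof. It truncates the piecewise-constant edge eigenvector with $\phi(y/\eta)$, observes that the residual involves only $\phi'$ in the $D_y$ rows and is $O(e^{-C\eta|k_x|})$, and uses self-adjointness of $H_I(k_x)$ to bound $\mathrm{dist}(\omega_e^{(j)}(k_x),\sigma(H_I(k_x)))$. In the Lipschitz case it adds the $(\om(y)-\om_\pm)$ terms. There your normalized $O(1/|k_x|)$ rate is what its unnormalized $O(|k_x|^{-3})$ squared-residual estimate actually gives; the paper's stated $|k_x|^3$ resolvent bound drops the normalization.

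The paper then simply takes $E_j(k_x)$ to be the nearest point of $\sigma(H_I(k_x))$, so your extra care in Step 3 supplies a step the paper leaves implicit rather than departing from its approach. That care has two parts. First, you keep $\hat\omega_e^{(j)}$ off the essential spectrum, in particular off the closure of the local upper-hybrid range $\{\sqrt{\op^2+\om(y)^2}\}$; this is genuinely an additional hypothesis rather than something you can always check. Second, you offer the Evans-function/Rouch\'e alternative.
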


\proof
We assume WLOG that a discontinuity in $\om(y)$ exists at $y = 0$ and $\om(y)$ is continuous in some interval $[-2\eta, 2\eta]$ around 0. First we assume that $\om(y) = \om_+$ for $0 < y < 2\eta$ and $\om(y) = \om_-$ for $-2\eta < y< 0$. Denote the solutions $(\psi_j(k_x), \omega_{e}^{(j)}(k_x))$, $j \in \{1, 2, 3\}$ as eigenvectors and eigenvalues respectively of $H_I(k_x)$ with $\om(y) = \om_-$, $y< 0$; $\om(y) = \om_+$, $y \ge 0$, which we found in the previous section to be of the form:
\[
\psi_j(y) = \psi_{j0}(y) \gamma(y), \qquad \gamma(y) = \begin{cases}
    e^{-\kappa_Ny} & y\ge 0\\
    e^{\kappa_S y} & y < 0
\end{cases}
\]
where $\psi_{j0}(y) \in \Cm^5$ is constant in the first two coordinates and piecewise constant in the last three coordinates. Now introduce $\phi \in C^\infty_c(\Rm)$ such that $\sup_{y\in \Rm}\phi(y) \le 1$, $\phi(y) = 0$ for $|y| \ge 2$, and $\phi(y) = 1$ for $|y| \le 1$ and denote $\varphi_{j\eta} =\psi_j(y)\phi\left(\frac y\eta\right)$. Then since $\om(y) = \om_+$ or $\om(y) = \om_-$ on the support of $\varphi_{j\eta}$ we obtain:
\[
H_I(k_x)\varphi_{j\eta} = \omega_e^{(j)}(k_x)\varphi_{j\eta} + \frac i\eta\phi'\left(\frac y\eta\right)\gamma(y)
\big(0,0,\psi_{j0}^{(5)},0,\psi_{j0}^{(3)}\big)^T.
%\begin{pmatrix} 0\\0\\ \psi_{j0}^{(5)}\\0 \\ \psi_{j0}^{(3)}\end{pmatrix}
\]
Since $\phi\in C_c^\infty$, $\phi'$ is compactly supported and bounded, so that:
\[
||\left(H_I(k_x)-\omega_e^{(j)}(k_x)\right)\varphi_{j\eta}||_2^2 = \frac C\eta^2 \int_\Rm \phi'\left(\frac y\eta\right)^2 \gamma^2(y) dy \le C \frac{||\phi'||_\infty^2}{\eta^2}\left(\int_\eta^{2\eta}e^{-2\kappa_Ny}dy + \int_{-2\eta}^{-\eta}e^{2\kappa_Sy}dy\right).
\]
Assuming that $\omega_e^{(j)}(k_x)$ is bounded we see that for sufficiently large $|k_x|$, $\kappa\ge C|k_x|$ via the relation \eqref{eq:kappa}. Therefore we get that, for $|k_x|$ sufficiently large:
\[
||\left(H_I(k_x)-\omega_e^{(j)}(k_x)\right)\varphi_{j\eta}||_2 \le C_\eta e^{-\eta C|k_x|}
\]
We now drop the assumption that $\om(y)$ is constant on the intervals $[-2\eta, 0)$, $[0, 2\eta]$. Noting that (by assumption) $\lim_{y\to 0^-}\om(y) = \om_-$ we have that $\om(y)$ is Lipschitz in the intervals $[-2\eta, 0]$, $[0, 2\eta]$ so that when $y \in [0, 2\eta] $ $|\om(y) -\om_+|\le K_N|y|$ and when $y \in [-2\eta, 0)$, $|\om(y)-\om_-| \le K_S|y|$ for non-negative Lipschitz constants $K_S, K_N$. Using the same estimate $\varphi_{j\eta}$ we get that:
\[
H_I(k_x)\varphi_{j\eta} = \omega_e^{(j)}(k_x)\varphi_{j\eta} +  i\gamma(y)\begin{pmatrix}
    (\om(y)-\om_+)\phi\left(\frac y\eta\right)\psi_{j0}^{(2)}\\ -(\om(y)-\om_+)\phi\left(\frac y\eta\right)\psi_{j0}^{(1)}\\  \frac 1\eta\phi'\left(\frac y\eta\right)\psi_{j0}^{(5)}\\0\\ \frac1\eta\phi'\left(\frac y\eta\right)\psi_{j0}^{(3)}
\end{pmatrix}, \;\;\; y\ge0\]
\[
H_I(k_x)\varphi_{j\eta} = \omega_e^{(j)}(k_x)\varphi_{j\eta} +  i\gamma(y)\begin{pmatrix}
    (\om(y)-\om_-)\phi\left(\frac y\eta\right)\psi_{j0}^{(2)}\\ -(\om(y)-\om_-)\phi\left(\frac y\eta\right)\psi_{j0}^{(1)}\\  \frac 1\eta\phi'\left(\frac y\eta\right)\psi_{j0}^{(5)}\\0\\ \frac1\eta\phi'\left(\frac y\eta\right)\psi_{j0}^{(3)}
\end{pmatrix}, \;\;\; y<0
\]
Therefore we get by the triangle inequality:
\[
||H_I(k_x)\varphi_{j\eta}-\omega_e^{(j)}(k_x)||_2^2 \le \frac {C_1} {\eta^2} \int_\Rm \phi'\left(\frac y\eta\right)^2 \gamma^2(y) dy \;+
\]
\[C_2\left( \int_{-2\eta}^0 e^{2\kappa_Sy}(\om(y)-\om_-)^2\phi^2\left(\frac y\eta\right)dy + \int_0^{2\eta}e^{-2\kappa_Ny}(\om(y)-\om_+)^2\phi^2\left(\frac y\eta\right)dy\right).
\]
The first integral we have already estimated above. For the second integral we can use the Lipschitz property of $\om(y)$ to bound:
\[
 \int_{-2\eta}^0 e^{2\kappa_Sy}(\om(y)-\om_-)^2\phi^2\left(\frac y\eta\right)dy\le K_S ||\phi^2||_\infty \int_{-2\eta}^0y^2e^{2\kappa_Sy}dy = \frac{C_S}{\kappa_S^3}\left(2-e^{-2\eta\kappa_S}(4\kappa_S^2\eta^2+4\kappa_S\eta + 2)\right).
\]
Again for sufficiently large $k_x$ we get $\kappa \ge C|k_x|$ so that:
\[
\int_{-2\eta}^0 e^{2\kappa_Sy}(\om(y)-\om_-)^2\phi^2\left(\frac y\eta\right)dy \le \frac{C_S}{|k_x|^3}\left(2-e^{-2\eta C|k_x|}(4\kappa_S^2\eta^2+4\kappa_S\eta + 2)\right).
\]
An identical estimate can be made for the third integral. Therefore we get that as $|k_x|\to \infty$:
\[
||H_I(k_x)\varphi_{j\eta}-\omega_e^{(j)}(k_x)||_2^2 = O\left(\frac 1{|k_x|^3}\right).
\]
Then by standard estimates (see e.g. \cite{kato1966perturbation}) $\sup_{E\in\sigma(H_I(k_x))} |E-\omega_e^{(j)}(k_x)|^{-1} = ||(H_I(k_x)\varphi_{j\eta}-\omega_e^{(j)}(k_x))^{-1}||_2 \ge C_\eta e^{\eta C|k_x|}$ if we assume that $\om(y)$ is constant in the intervals $[-2\eta, 0)$, $[0, 2\eta]$. Similarly if $\om(y)$ is merely Lipschitz continuous on the same intervals we have that $||(H_I(k_x)\varphi_{j\eta}-\omega_e^{(j)}(k_x))^{-1}||_2 \ge C |k_x|^3$. $E_j(k_x)$ is therefore defined as $\arg\sup_{E\in \sigma(H_I(k_x))}|E-\omega_e^{(j)}(k_x)|$. The full interface operator $H_I$ (without taking the Fourier transform in $x$) is self adjoint on an appropriate domain (described above) so that the branches $\omega_e^{(j)}(k_x)$ are analytic in $k_x$ and thus $||(H_I(k_x)-\omega_e^{(j)}(k_x))\varphi_{j\eta}||$ is also bounded for values of $k_x$ near 0. Therefore the preceding inequalities in fact hold for all values of $k_x$ for possibly larger constants. We have from above that $E_j(k_x) \to \hat\omega_e^{(j)}$ as $|k_x|\to \infty$ (for the appropriate sign of $k_x$), which completes the proof. \endproof

\bibliographystyle{unsrt}
\bibliography{Refs}

\end{document}